\numberwithin{equation}{section}
\newcommand{\Nat}{{\mathbb N}}
\newcommand{\lra}{\longrightarrow}
\newcommand{\da}{\mathord{\downarrow}}
\newcommand{\ua}{\mathord{\uparrow}}
\newcommand{\cl}{\operatorname{cl}}
\renewcommand{\S}{\mathcal{S}}
\newcommand{\T}{\mathcal{T}}
\newcommand{\dsup}{\mathop{\bigvee{}^{\,\makebox[0pt]{$\scriptstyle\uparrow$}}}}
\newcommand{\tda}{\mathord{\nabla}}
\renewcommand{\H}{{\mathcal H}}
\newcommand{\ninfty}{\overline{\Nat}}
\newcommand{\cn}[2]{{#1.#2}}
\newcommand{\tr}[3]{(#1,#2,#3)}
\newcommand{\ub}{\operatorname{ub}}
\newcommand{\sset}[1]{{\{#1\}}}
\newcommand{\set}[2]{\{{#1}\mid{#2}\}}
\newcommand{\sob}[1]{{#1}^s}
\newcommand{\osob}[1]{\widehat{#1}}
\newcommand{\cp}{\mathrm{irr}}
\newcommand{\el}{\varepsilon}
\newcommand{\sleq}{\sqsubseteq}
\newcommand{\sless}{\sqsubset}
\newcommand{\defeq}{\mathrel{\mathord{:}\mathord{=}}}
\begin{document}

\title[The Ho-Zhao problem]{The Ho-Zhao problem}

\author[W. K. Ho]{Weng Kin Ho}	
\address{National Institute of Education, Nanyang Technological University, Singapore}	
\email{wengkin.ho@nie.edu.sg}  
\thanks{The first author is partially supported by London Mathematical Society
  Computer Science Small Grants -- Scheme~7.}	

\author[J. Goubault-Larrecq]{Jean Goubault-Larrecq}	
\address{LSV, ENS Paris-Saclay, CNRS, Universit\'e Paris-Saclay, 94230 Cachan, France}	
\email{goubault@lsv.fr}  

\author[A. Jung]{Achim Jung}	
\address{School of Computer Science, The University of Birmingham, United Kingdom}	
\email{A.Jung@cs.bham.ac.uk}  

\author[X. Xi]{Xiaoyong Xi} 
\address{Department of Mathematics, Jiangsu Normal University, Xuzhou, China} 
\email{littlebrook@jsnu.edu.cn} 
\thanks{The fourth author is supported by the following research funds: (1) NSFC
  (National Science Foundation of China): 11671008; (2) NSFC: 1136028; and (3)
  NSF of project of Jiangsu Province of China (BK 20170483).} 



\keywords{Ho-Zhao problem; Scott topology; Scott-closed sets; sobrification; Johnstone's counterexample}
\subjclass[2010]{06B35}


\begin{abstract}
\noindent Given a poset $P$, the set $\Gamma(P)$ of all Scott closed sets ordered by inclusion forms a complete lattice.  A subcategory $\mathbf{C}$ of $\mathbf{Pos}_d$ (the category of posets and Scott-continuous maps) is said to be $\Gamma$-faithful if for any posets $P$ and $Q$ in $\mathbf{C}$, $\Gamma(P) \cong \Gamma(Q)$ implies $P \cong Q$. It is known that the category of all continuous dcpos and the category of bounded complete dcpos are $\Gamma$-faithful, while $\mathbf{Pos}_d$ is not.  Ho \& Zhao (2009) asked whether the category $\mathbf{DCPO}$ of dcpos is $\Gamma$-faithful.  In this paper, we answer this question in the negative by exhibiting a counterexample.
To achieve this, we introduce a new subcategory of dcpos which is $\Gamma$-faithful.  This subcategory subsumes all currently known $\Gamma$-faithful subcategories.  With this new concept in mind, we construct the desired counterexample which relies heavily on Johnstone's famous dcpo which is not sober in its Scott topology.
\end{abstract}

\maketitle
\section{Introduction}
\label{sec: intro}
The collection, $\Gamma(X)$, of closed subsets of a topological space~$X$, ordered by inclusion, forms a distributive complete lattice often referred to as the \emph{closed set lattice} of~$X$.
If a lattice $L$ is isomorphic to the closed set lattice of some topological space~$X$, we say that $X$ is a \emph{topological representation} of~$L$.  It is natural to ask:
\begin{qu} \label{quest: first question}
Which lattices have topological representations?
\end{qu}
Seymour Papert was the first to characterize such lattices as those which are complete, distributive, and have a base consisting of irreducible elements,~\cite{papert59}.

We can also ask how much of the topological structure of a space is encoded in its closed set lattice. Following Wolfgang Thron, \cite{thron62}, we say that two topological spaces $X$ and~$Y$ are \emph{lattice-equivalent} if their closed-set lattices are order-isomorphic.  Clearly, homeomorphic spaces are lattice-equivalent, but the converse fails (even for $T_0$~spaces).
This then leads us to:
\begin{qu} \label{quest: second question}
Which classes $\mathbf{C}$ of topological spaces are such that any two lattice-equivalent spaces $X$ and~$Y \in \mathbf{C}$ are homeomorphic, i.e.:
\[
\forall X, Y \in \mathbf{C}.\;\Gamma(X) \cong \Gamma(Y) \implies X \cong Y\;?
\]
\end{qu}
Sober topological spaces are exactly those that can be fully reconstructed from their closed set lattices, \cite{drakethron65}; therefore the class $\mathbf{Sob}$ of sober spaces is a natural choice in answer to Question~\ref{quest: second question}. Furthermore, any topological space~$X$ is lattice-equivalent to its sobrification $\sob X$ and so it follows that $\mathbf{C}$ cannot contain a non-sober space~$X$ and its sobrification~$\sob X$ at the same time, in other words, $\mathbf{Sob}$ is a \emph{maximal} choice for~$\mathbf{C}$.

The two questions above can also be asked in the context of a particular class of topological spaces. The ones we have in mind were introduced by Dana Scott, \cite{scott72,gierzetal03}, and are known collectively as \emph{domains}. The characteristic feature of domains is that they carry a partial order and that their topology is completely determined by the order.
More precisely, let $P$ be a poset and $U$ a subset of~$P$. One says that $U$ is \emph{Scott open}, if (i)~$U$~is an upper set, and (ii)~$U$ is inaccessible by directed joins.  The set~$\sigma(P)$ of all Scott opens of $P$ forms the \emph{Scott topology} on~$P$, and $\Sigma P := (P,\sigma(P))$ is called the \emph{Scott space} of~$P$. In what follows, for a poset~$P$, we write $\Gamma(P)$ to always mean the lattice of Scott-closed subsets of~$P$.

We may now relativize our definitions to the context of Scott spaces. We say that a lattice~$L$ has a \emph{Scott-topological representation} if $L$ is isomorphic to $\Gamma(P)$ for some poset~$P$, and ask:
\begin{prob} \label{prob: first problem}
  Which lattices have Scott-topological representations?
\end{prob}
Although some work has been done on this problem, \cite{hozhao09}, as of now it remains open. In the special case of \emph{continuous domains} a very pleasing answer was given independently by Jimmie Lawson, \cite{lawson79}, and Rudolf-Eberhardt Hoffmann, \cite{hoffmann81}. They showed that a lattice~$L$ has a Scott-topological representation~$\Gamma(P)$ for some continuous domain~$P$ if and only if $L$ is completely distributive.

In the order-theoretic context the  second question reads as follows:
\begin{prob} \label{prob: second problem}
Which classes of posets $\mathbf{C}$ satisfy the condition
\[
\forall P,~Q \in \mathbf{C}.\;\Gamma(P) \cong \Gamma(Q) \implies P \cong Q\;?
\]
\end{prob}
A class $\mathbf{C}$ of posets is said to be \emph{$\Gamma$-faithful} if the above condition holds (\cite[p. 2170, Remark 2]{zhaofan10}).
The following classes of posets are known to be $\Gamma$-faithful. For the first, this follows directly from the definition of sobriety; the second statement was proven in \cite{hozhao09}.
\begin{enumerate}
\item $\mathbf{SOB}_\sigma$ of dcpos whose Scott topologies are sober, containing in particular $\mathbf{Cont}$, the class of all continuous dcpos;
\item $\mathbf{CSL}$ of complete semilattices (i.e., dcpos where all bounded subsets have a supremum),  containing in particular all complete lattices. \end{enumerate}

A \emph{dcpo-completion} of a poset $P$ is a dcpo $A$ together with a Scott-continuous mapping $\eta:P \lra A$ such that for any Scott-continuous mapping $f:P \lra B$ into a dcpo $B$ there exists a unique Scott-continuous mapping $\hat{f}:A \lra B$ satisfying $f = \hat{f} \circ \eta$.  It was shown in \cite{zhaofan10} that the dcpo-completion, $E(P)$, of a poset $P$ always exists; furthermore $\Gamma(P) \cong \Gamma(E(P))$.  Hence the class, $\mathbf{POS}$, of all posets is \emph{not} $\Gamma$-faithful.  Indeed any class of posets that is strictly larger than $\mathbf{DCPO}$, the class of dcpos, is not $\Gamma$-faithful.  This means that we can restrict attention entirely to dcpos, in particular, it leads one to ask:
\begin{prob}(\cite{hozhao09},~\cite[Remark 2]{zhaofan10})
\label{prob: Ho-Zhao problem}
Is $\mathbf{DCPO}$ $\Gamma$-faithful?
\end{prob}
This question was dubbed the \emph{Ho-Zhao problem} in \cite{alghousseinibrahim15}. The authors of this paper claimed that the two dcpos
\[
\Upsilon = ([0,1],\leq) \text{ and } \Psi = (\{[0,a] \mid 0 < a \leq 1\},\subseteq)
\]
show that $\mathbf{DCPO}$ is not $\Gamma$-faithful. However, it is easy to see that $\Psi \cong ((0,1],\leq)$ so that $\Gamma(\Psi) \cong ([0,1],\leq)$.  On the other hand, $\Gamma(\Upsilon) \cong ([0,1],\leq)_\bot$ (the unit interval plus an additional least element) which is not isomorphic to $\Gamma(\Psi)$.  This failure is only to be expected as both $\Upsilon$ and $\Psi$ are continuous dcpos and we already noted that $\mathbf{Cont}$ is $\Gamma$-faithful.


This paper comprises two parts. In Section~\ref{sec: positive} we present a positive result by introducing the class $\mathbf{domDCPO}$ of \emph{dominated dcpos} and showing it to be $\Gamma$-faithful.  Importantly, $\mathbf{domDCPO}$ subsumes all currently known $\Gamma$-faithful classes listed above.  In the second part (Sections~\ref{sec:johnstone's-countereg}--\ref{sec:hoxi-structure}) we show that the answer to the Ho-Zhao problem is negative.  We construct a dcpo~$\H$ which is not dominated, and derive from it a dominated dcpo~$\osob{\H}$ so that $\H \not \cong \osob{\H}$ but $\Gamma(\H) \cong \Gamma(\osob{\H})$.  The construction makes use of Johnstone's famous example of a dcpo~$\S$ whose Scott topology is not sober (\cite{johnstone81}).  To familiarize the reader with Johnstone's dcpo~$\S$, we revisit it in Section~\ref{sec:johnstone's-countereg}, highlighting its peculiarities.
This prepares us for the counterexample~$\H$, informally presented in Section~\ref{sec:hoxi-informal}.  Intrepid readers who are keen to pursue the detailed construction of $\H$ and how it works in answering the Ho-Zhao problem may then continue their odyssey into Section~\ref{sec:hoxi-structure}.

For notions from topology and domain theory we refer the reader to~\cite{abramskyjung94,gierzetal03,goubault13a}.

\section{A positive result}
\label{sec: positive}
\subsection{Irreducible sets}
A nonempty subset $A$ of a topological space $(X;\tau)$ is called \emph{irreducible} if whenever $A\subseteq B\cup C$ for closed sets $B$ and~$C$, $A\subseteq B$ or $A\subseteq C$ follows. We say that $A$ is \emph{closed irreducible} if it is closed and irreducible. The set of all closed irreducible subsets of~$X$ is denoted by~$\osob X$. The following two facts about irreducible sets follow immediately from the definition:
\begin{itemize}
\item If $A$ is irreducible then so is its topological closure.
\item The direct image of an irreducible set under a continuous function is again irreducible.
\end{itemize}
In this paper we are exclusively interested in dcpos and the Scott topology on them. When we say ``(closed) irreducible'' in this context we always mean (closed) irreducible with respect to the Scott topology. It is a fact that every directed set in a dcpo is irreducible in this sense, while a set that is directed and closed is of course a principal ideal, that is, of the form $\da x$. Peter Johnstone discovered in 1981, \cite{johnstone81}, that a dcpo may have irreducible sets which are not directed, and closed irreducible sets which are not principal ideals; we will discuss his famous example in Section~\ref{sec:johnstone's-countereg} and someone not familiar with it may want to have a look at it before reading on.

Given that we may view irreducible sets as a generalisation of directed sets, the following definition suggests itself:
\begin{defi}
  A dcpo $(D;\leq)$ is called \emph{strongly complete} if every irreducible subset of~$D$ has a supremum. In this case we also say that $D$ is an \emph{scpo}. A subset of an scpo is called \emph{strongly closed} if it closed under the formation of suprema of irreducible subsets.
\end{defi}

Note that despite this terminology, strongly complete partial orders are a long way from being complete in the sense of lattice theory.

\subsection{Categorical setting}
The basis of Questions \ref{quest: first question} and~\ref{quest: second question} in the Introduction is the well-known (dual) adjunction between topological spaces and frames, which for our purposes is expressed more appropriately as a (dual) adjunction between topological spaces and coframes:
\begin{displaymath}
  \xymatrix{ \makebox[2em][r]{$\Gamma:\mathbf{Top}$}\ar@/^/[r]_\bot & \makebox[2em][l]{$\mathbf{coFrm}^{\mathrm{op}}:\mathrm{spec}$}\ar@/^/[l]
    }
\end{displaymath}
Here $\mathrm{spec}$ is the functor that assigns to a coframe its set of \emph{irreducible} elements, that is, those elements~$a$ for which $a\leq b\vee c$ implies $a\leq b$ or $a\leq c$,\footnote{Strictly speaking, this is the definition for a \emph{coprime} element but in distributive lattices there is no difference between the two.} topologized by closed sets $B_b=\set{a\in\mathrm{spec}(L)}{a\leq b}$. Starting with a topological space $(X;\tau)$ we obtain the \emph{sobrification}~$\sob X$ of~$X$ by composing $\mathrm{spec}$ with~$\Gamma$. Concretely, the points of $\sob X$ are given by the closed irreducible subsets of~$X$ and the topology is given by closed sets $\sob B=\set{A\in\sob X}{A\subseteq B}$ where $B\in\Gamma(X)$.

In order to present Problems \ref{prob: first problem} and~\ref{prob: second problem} in a similar fashion, it seems natural to replace $\mathbf{Top}$ with the category of \emph{dcpo spaces}, i.e., directed-complete partially ordered sets equipped with the Scott topology. However, while we know that $\mathrm{spec}$ yields topological spaces which are dcpos in their specialisation order, it is \emph{not the case} that the topology on $\mathrm{spec}(L)$ equals the Scott topology with respect to that order; all we know is that every $\sob B$ is Scott-closed.

Rather than follow a topological route, therefore, we reduce the picture entirely to one concerning \emph{ordered sets}. To this end we restrict the adjunction above to the category $\mathbf{MCS}$ of \emph{monotone convergence spaces} (\cite[Definition~II-3.12]{gierzetal03}) on the topological side and compose it with the adjunction between monotone convergence spaces and dcpos
\begin{displaymath}
  \xymatrix{ \Sigma:\mathbf{DCPO}\ar@/^/[r]_\bot & \mathbf{MCS}:\mathrm{so}\ar@/^/[l]
    }
\end{displaymath}
which assigns to a dcpo its Scott space and to a space its set of points with the \emph{specialisation order}.
We obtain a functor from $\mathbf{DCPO}$ to~$\mathbf{coFrm}$ which assigns to a dcpo its coframe of Scott-closed subsets and we re-use the symbol~$\Gamma$ for it rather than writing $\Gamma\circ\Sigma$. In the other direction, we assign to a coframe the \emph{ordered set} of irreducible elements, where the order is inherited from the coframe. To emphasize the shift in perspective, we denote it with~$\cp$ rather than~$\mathrm{spec}$ or $\mathrm{so}\circ\mathrm{spec}$. It will also prove worthwhile to recall the action of $\cp$ on morphisms: If $h\colon L\to M$ is a coframe homomorphism, then $\cp(h)$ maps an irreducible element~$a$ of~$M$ to $\bigwedge\set{x\in L}{h(x)\geq a}$.
Altogether we obtain the following (dual) adjunction:
\begin{displaymath}
  \xymatrix{ \makebox[2em][r]{$\Gamma:\mathbf{DCPO}$}\ar@/^/[r]_\bot & \makebox[2em][l]{$\mathbf{coFrm}^{\mathrm{op}}:\cp$}\ar@/^/[l]
    }
\end{displaymath}
Its unit~$\eta_D$ maps an element $x$ of a dcpo~$D$ to the principal downset $\da x$, which is always closed and irreducible and hence an element of $\cp(\Gamma(D))$. The counit $\varepsilon_L$ (as a concrete map between coframes) sends an element~$b$ of a coframe~$L$ to the set $\set{a\in\cp(L)}{a\leq b}$ which is clearly closed under directed suprema, hence an element of $\Gamma(\cp(L))$.

Combining $\cp$ with~$\Gamma$ yields a monad on $\mathbf{DCPO}$ which we denote with $\osob{(\;)}$. Concretely, it assigns to a dcpo~$D$ the set~$\osob D$ of closed irreducible subsets ordered by inclusion. We call this structure the \emph{order sobrification of~$D$}. If $f\colon D\to E$ is a Scott-continuous function between dcpos, and $A\subseteq D$ is a closed irreducible set, then $\osob f(A)$ is the Scott closure of the direct image~$f(A)$ (which is again irreducible as we noted at the beginning of this section). The monad unit is given by the unit of the adjunction mentioned before. Following through the categorical translations one sees that multiplication $\mu_D=\cp(\epsilon_{\Gamma D})\colon\osob{\osob D}\to\osob D$ maps a closed irreducible collection\footnote{To help the reader we will usually use the words ``collection'' or ``family'' rather than ``set'' when referring to subsets of~$\osob D$.}  of closed irreducible subsets to the closure of their union. However, one can show that the closure operation is not needed and we recall the proof in a moment. For now let us stress that this order-theoretic monad is \emph{not} idempotent which is an important difference to its topological counterpart, the sobrification monad. We will illustrate this in Section~\ref{sec:hoxi-informal} with the help of Johnstone's non-sober dcpo.

In preparation for the calculations that follow, let us explore concretely some of the ingredients of the order sobrification monad.
\begin{prop}\label{prop:union}
  Let $D$ be a dcpo.
  \begin{enumerate}
  \item If $B$ is a closed set of $D$ then $\varepsilon_{\Gamma(D)}(B)=\set{A\in\osob D}{A\subseteq B}\in\Gamma(\osob D)$.
  \item If $\mathcal{A}$ is irreducible as a subset of~$\osob D$ then $\bigcup\mathcal{A}$ is irreducible as a subset of~$D$.
  \item If $\mathcal{A}$ is Scott-closed as a subset of~$\osob D$ then $\bigcup\mathcal{A}$ is Scott-closed as a subset of~$D$.
  \item $\osob D$ is strongly complete.
  \item If $B$ is a closed set of $D$ then $\varepsilon_{\Gamma(D)}(B)$ is strongly closed.
  \end{enumerate}
\end{prop}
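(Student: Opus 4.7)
The plan is to prove the five items in order, after first establishing a preliminary fact: for any directed family $\mathcal{A}' \subseteq \osob D$, the union $\bigcup \mathcal{A}'$ is irreducible in $D$, so $\overline{\bigcup \mathcal{A}'}$ lies in $\osob D$ and is the directed supremum of $\mathcal{A}'$ there. For irreducibility, if $\bigcup \mathcal{A}' \subseteq B \cup C$ with $B, C$ Scott-closed in $D$, then each $A \in \mathcal{A}'$ is itself irreducible and hence contained in $B$ or in $C$; a short argument using the directedness of $\mathcal{A}'$ (by inclusion) shows that one of these inclusions must hold uniformly. Irreducibility is preserved under topological closure, so $\overline{\bigcup \mathcal{A}'}$ is an element of $\osob D$, and any closed irreducible upper bound of $\mathcal{A}'$ must contain it.

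For (1), $\varepsilon_{\Gamma(D)}(B)$ is evidently a lower set in $\osob D$ under inclusion; and for any directed $\mathcal{A}' \subseteq \varepsilon_{\Gamma(D)}(B)$, closedness of $B$ yields $\overline{\bigcup \mathcal{A}'} \subseteq B$, so the supremum lies in $\varepsilon_{\Gamma(D)}(B)$. For (2), suppose $\mathcal{A}$ is irreducible in $\osob D$ and $\bigcup \mathcal{A} \subseteq B_1 \cup B_2$ with $B_1, B_2 \in \Gamma(D)$. Each $A \in \mathcal{A}$ is irreducible, so $A \subseteq B_1$ or $A \subseteq B_2$, giving $\mathcal{A} \subseteq \varepsilon_{\Gamma(D)}(B_1) \cup \varepsilon_{\Gamma(D)}(B_2)$. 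By (1) each of these is Scott-closed in $\osob D$, so irreducibility of $\mathcal{A}$ forces $\mathcal{A} \subseteq \varepsilon_{\Gamma(D)}(B_i)$ for some $i$, whence $\bigcup \mathcal{A} \subseteq B_i$.

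For (3), let $\mathcal{A}$ be Scott-closed in $\osob D$. That $\bigcup \mathcal{A}$ is a lower set in $D$ is immediate. For closure under directed suprema, take a directed $\{d_i\}$ in $\bigcup \mathcal{A}$ with supremum $d$ in $D$; each $d_i$ lies in some $A_i \in \mathcal{A}$, so the principal ideal $\da d_i \subseteq A_i$ belongs to $\mathcal{A}$ by the lower-closure of $\mathcal{A}$. The family $\{\da d_i\}$ is directed in $\osob D$; by the preliminary, its supremum is the Scott closure of $\bigcup_i \da d_i$, which is precisely $\da d$. Closedness of $\mathcal{A}$ then gives $\da d \in \mathcal{A}$, so $d \in \bigcup \mathcal{A}$.

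Finally, (4) is immediate from (2): for any irreducible $\mathcal{A}$ in $\osob D$, $\overline{\bigcup \mathcal{A}} \in \osob D$ and is the least upper bound. And (5) follows exactly as in (1): for irreducible $\mathcal{A} \subseteq \varepsilon_{\Gamma(D)}(B)$ the supremum is $\overline{\bigcup \mathcal{A}} \subseteq B$. The main obstacle I anticipate is the interlocking order of the arguments — the preliminary description of directed suprema must precede (1), which is used in (2), which in turn yields (4) and (5), while (3) hinges on the nonobvious trick of reducing to principal ideals — but once this scaffolding is in place each step reduces to a brief, essentially formal check.
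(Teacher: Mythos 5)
Your proof is correct and follows essentially the same route as the paper: item (1) feeding into (2), the reduction to principal ideals in (3), and the closure of the union giving the supremum for (4) and (5). The only organizational difference is that you isolate the description of directed suprema in $\osob D$ as a preliminary (proved via a direct directedness argument rather than as a special case of (2)), which the paper leaves implicit; this changes nothing of substance.
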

\begin{proof}
  (1) This follows from the general description of the counit.

  (2) Assume we have $\bigcup\mathcal{A}\subseteq B\cup C$ with $B,C$ Scott-closed subsets of~$D$. Then every member of $\mathcal{A}$ must belong to either $B$ or~$C$ because of their irreducibility and so it follows from (1) that $\mathcal{A}\subseteq\varepsilon_{\Gamma(D)}(B)\cup\varepsilon_{\Gamma(D)}(C)$. Furthermore, $\varepsilon_{\Gamma(D)}(B)$ and $\varepsilon_{\Gamma(D)}(C)$ are Scott-closed subsets of~$\osob D$, so by assumption, one of them must cover the irreducible collection~$\mathcal{A}$, say $\varepsilon_{\Gamma(D)}(B)$. It then follows that $\bigcup\mathcal{A}$ is contained in~$B$.

  (3) If $S$ is a directed subset of $\bigcup\mathcal{A}$ then each $x\in S$ must belong to some $A_x\in\mathcal{A}$. Because each $A_x$ is a lower set of~$D$ we have $\da x\subseteq A_x$ and because $\mathcal{A}$ is a lower set of $\osob D$ it follows that $\da x\in\mathcal{A}$. The collection $(\da x)_{x\in S}$ is directed and its supremum~$\da\dsup S$ belongs to~$\mathcal{A}$ as $\mathcal{A}$ is assumed to be Scott-closed. If follows that $\dsup S$ belongs to $\bigcup\mathcal{A}$.

  (4) Let $\mathcal{A}$ be an irreducible subset of $\osob D$. Then by (2) we have that $\bigcup\mathcal{A}$ is an irreducible subset of~$D$, whence its closure~$A$ is an element of~$\osob D$. Clearly $A$ is the supremum of~$\mathcal{A}$.

  (5) follows from (4) because $B$ is a dcpo in its own right and $\varepsilon_{\Gamma(D)}(B)$ is the same poset as~$\osob B$.
\end{proof}

\begin{lem}\label{lem:strong}
  Let $D$ be a dcpo and $\mathcal{B}\in\Gamma(\osob D)$.
  \begin{enumerate}
  \item $\mathcal{B}\subseteq\varepsilon_{\Gamma(D)}(\bigcup\mathcal{B})$.
  \item If $\mathcal{B}$ is strongly closed then equality holds.
  \end{enumerate}
\end{lem}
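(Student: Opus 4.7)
Part~(1) is essentially immediate: for any $A \in \mathcal{B}$ we have $A \subseteq \bigcup \mathcal{B}$, so the description of $\varepsilon_{\Gamma(D)}$ from Proposition~\ref{prop:union}(1) gives $A \in \varepsilon_{\Gamma(D)}(\bigcup\mathcal{B})$. For part~(2), I would aim to prove the reverse inclusion $\varepsilon_{\Gamma(D)}(\bigcup\mathcal{B}) \subseteq \mathcal{B}$ under the strong-closedness hypothesis. The strategy is: given a closed irreducible $A \subseteq \bigcup \mathcal{B}$, exhibit $A$ as the supremum in $\osob D$ of an irreducible subcollection of $\mathcal{B}$; then strong closedness of $\mathcal{B}$ forces $A \in \mathcal{B}$.

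The natural candidate is $\mathcal{A} \defeq \{\da x \mid x \in A\}$, the image of $A$ under the monad unit $\eta_D$. I would verify three properties of $\mathcal{A}$. First, $\mathcal{A}\subseteq\mathcal{B}$: each $x\in A$ lies in some $B'\in\mathcal{B}$, and since $B'$ is a lower set we have $\da x\subseteq B'$; since $\mathcal{B}$ is itself Scott-closed and hence a lower set of $\osob D$, $\da x\in\mathcal{B}$. Second, $\mathcal{A}$ is irreducible in $\osob D$: as $\eta_D$ is Scott-continuous (being the unit of the adjunction) and continuous direct images preserve irreducibility (recorded at the start of this section), the image $\eta_D(A) = \mathcal{A}$ of the irreducible set $A$ is irreducible. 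Third, $A$ is the supremum of $\mathcal{A}$ in $\osob D$: by Proposition~\ref{prop:union}(4) the supremum is the closure of $\bigcup\mathcal{A} = \bigcup_{x\in A}\da x$, which equals $A$ because $A$ is already closed and downward-closed.

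Combining these, $\mathcal{A}$ is an irreducible subfamily of $\mathcal{B}$ whose supremum in $\osob D$ is $A$, so strong closedness of $\mathcal{B}$ delivers $A\in\mathcal{B}$, completing the proof. The only step that requires any real care is the irreducibility of $\mathcal{A}$; if one prefers not to invoke Scott-continuity of $\eta_D$, one can argue directly. Suppose $\mathcal{A}\subseteq\mathcal{C}_1\cup\mathcal{C}_2$ with the $\mathcal{C}_i$ Scott-closed in $\osob D$. Then $A=\bigcup_{x\in A}\da x\subseteq\bigcup\mathcal{C}_1\cup\bigcup\mathcal{C}_2$; the right-hand side is Scott-closed by Proposition~\ref{prop:union}(3), so irreducibility of $A$ puts $A$ inside one of the two unions, and a final lower-set argument on the corresponding $\mathcal{C}_i$ places every $\da x$ with $x\in A$ into that $\mathcal{C}_i$, giving $\mathcal{A}\subseteq\mathcal{C}_i$ as required.
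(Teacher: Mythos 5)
Your proof is correct and follows essentially the same route as the paper: part~(1) is immediate from the description of the counit, and for part~(2) you take the family $\mathcal{A}=\{\da x\mid x\in A\}$, show it lies in $\mathcal{B}$, prove it is irreducible (your direct covering argument via Proposition~\ref{prop:union}(3) is exactly the paper's), and conclude by strong closedness since $A=\bigvee\mathcal{A}$. The alternative justification of irreducibility via Scott-continuity of $\eta_D$ is a valid minor shortcut but not a substantively different approach.
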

\begin{proof}
  The first statement is trivial by the definition of the counit~$\varepsilon$. For the second, let $A$ be a closed irreducible subset of $\bigcup\mathcal{B}$. We need to show that $A$ is an element of~$\mathcal{B}$. Every element~$x$ of~$A$ belongs to some $A_x\in\mathcal{B}$. As we argued in part~(3) of the preceding proof, it follows that for every $x\in A$, $\da x\in\mathcal{B}$. We claim that the collection $\mathcal{A}=\set{\da x}{x\in A}$ is irreducible as a subset of~$\osob D$. This will finish our proof as we clearly have that $A$ is the supremum of~$\mathcal{A}$ and by assumption, $\mathcal{B}$ is closed under forming suprema of irreducible subsets.

  So let $\mathcal{A}$ be covered by two closed collections $\mathcal{M},\mathcal{N}\in\Gamma(\osob D)$, in other words, every $\da x$, $x\in A$, belongs to either $\mathcal{M}$ or~$\mathcal{N}$. It follows that each $x\in A$ belongs to either $\bigcup\mathcal{M}$ or~$\bigcup\mathcal{N}$ and these two sets are Scott-closed by part~(3) of the preceding proposition. Because $A$ is irreducible, it is already covered by one of the two, and this implies that $\mathcal{A}$ is covered by either $\mathcal{M}$ or~$\mathcal{N}$.
\end{proof}

\subsection{Question~\ref{prob: second problem} revisited}
We approach Question~\ref{prob: second problem} via the monad $\osob{(\;)}$. Starting from the assumption $\Gamma(D)\cong\Gamma(E)$ we immediately infer $\osob D=\cp(\Gamma(D))\cong\cp(\Gamma(E))=\osob E$ and the question then becomes whether this isomorphism implies $D\cong E$. Our counterexample will demonstrate that in general the answer is ``no'' but in this section we will exhibit a new class~$\mathbf{domDCPO}$ of \emph{dominated dcpos} for which the answer is positive, that is, we will show:
\[
\forall D,~E \in \mathbf{domDCPO}.\;\osob D \cong \osob E \implies D \cong E\;.
\]
Before we do so, let us check that invoking the monad does not change the original question, in other words, the assumption $\osob D\cong\osob E$ is neither stronger nor weaker than $\Gamma(D)\cong\Gamma(E)$:
\begin{prop}
  For arbitrary dcpos $D$ and~$E$, $\;\;\osob D\cong\osob E\iff\Gamma(D)\cong\Gamma(E)$.
\end{prop}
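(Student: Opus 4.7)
The forward implication is immediate from functoriality: if $\Gamma(D)\cong\Gamma(E)$ then $\osob D=\cp(\Gamma(D))\cong\cp(\Gamma(E))=\osob E$. For the converse, my plan is to show that $\Gamma(D)$ can actually be reconstructed from the bare dcpo $\osob D$, namely as the poset of its strongly closed subsets ordered by inclusion; once this is established, any dcpo-isomorphism $\osob D\cong\osob E$ must induce $\Gamma(D)\cong\Gamma(E)$.

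The key step is to prove that the counit component
\[
\varepsilon_{\Gamma(D)}\colon\Gamma(D)\to\Gamma(\osob D),\qquad B\mapsto\{A\in\osob D\mid A\subseteq B\},
\]
is an order-embedding whose image is exactly the collection of strongly closed subsets of $\osob D$ (which is itself strongly complete by Proposition~\ref{prop:union}(4)). Monotonicity is clear, and since $\da x\in\varepsilon_{\Gamma(D)}(B)$ iff $x\in B$, the map is an order-embedding. Proposition~\ref{prop:union}(5) places the image inside the strongly closed subsets of $\osob D$. Conversely, for any strongly closed $\mathcal{B}\in\Gamma(\osob D)$, Proposition~\ref{prop:union}(3) yields $\bigcup\mathcal{B}\in\Gamma(D)$, and Lemma~\ref{lem:strong}(2) gives $\mathcal{B}=\varepsilon_{\Gamma(D)}(\bigcup\mathcal{B})$, so the image also exhausts the strongly closed subsets. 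This identifies $\Gamma(D)$ with the poset of strongly closed subsets of $\osob D$.

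To conclude, I observe that Scott-closedness, irreducibility of a subset, and suprema are all notions intrinsic to the dcpo structure of $\osob D$; consequently any dcpo-isomorphism $\phi\colon\osob D\to\osob E$ restricts to an order-isomorphism between the posets of strongly closed subsets of $\osob D$ and $\osob E$. Chaining the three order-isomorphisms
\[
\Gamma(D)\;\cong\;\{\text{strongly closed subsets of }\osob D\}\;\cong\;\{\text{strongly closed subsets of }\osob E\}\;\cong\;\Gamma(E)
\]
completes the argument. The only real work is the characterization of the image of $\varepsilon_{\Gamma(D)}$, which is exactly what Lemma~\ref{lem:strong} and Proposition~\ref{prop:union}(3)--(5) were designed to support; everything else is bookkeeping.
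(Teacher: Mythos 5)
Your argument is correct and is essentially the paper's own proof in different packaging: the paper composes $\bigcup\circ\Gamma(i^{-1})\circ\varepsilon_{\Gamma(D)}$ with its counterpart and checks mutual inverseness via Proposition~\ref{prop:union}(5) and Lemma~\ref{lem:strong}(2), which is precisely your characterization of the image of the counit together with the observation that strong closure is order-theoretically invariant. One wording fix: the image of $\varepsilon_{\Gamma(D)}$ is the collection of strongly closed \emph{members of} $\Gamma(\osob D)$ (strongly closed lower sets), not all strongly closed subsets of $\osob D$ (e.g.\ a singleton $\{A\}$ is strongly closed but generally not of the form $\varepsilon_{\Gamma(D)}(B)$); your ``conversely'' step already quantifies only over $\mathcal{B}\in\Gamma(\osob D)$, so it is the surrounding phrasing and the final chain of isomorphisms that need this qualifier, not the argument itself.
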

\begin{proof}
  The implication from right to left is trivial, so assume we are given an order isomorphism $i\colon\osob D\to\osob E$. The idea for an isomorphism~$\phi$ from $\Gamma(D)$ to $\Gamma(E)$ is very simple: Given $B\in\Gamma(D)$ we compute $\varepsilon_{\Gamma(D)}(B)$, the collection of all closed irreducible sets contained in~$B$ (which belongs to $\Gamma(\osob D)$ by \ref{prop:union}(1)). Each of these can be replaced with its counterpart in~$E$ via the given isomorphism~$i$. In~$E$, then, we simply take the union of the collection $i(\varepsilon_{\Gamma(D)}(B))$. Using the maps that are provided to us by the adjunction, we can express $\phi$ as follows:
  \begin{displaymath}
    \xymatrix{ \phi\colon\Gamma(D)\ar[r]^{\varepsilon_{\Gamma(D)}} & \Gamma(\osob D)\ar[r]^{\Gamma(i^{-1})} & \Gamma(\osob E)\ar[r]^{\bigcup} & \Gamma(E) }
  \end{displaymath}
  For an inverse, we follow the same steps, starting at $\Gamma(E)$:
  \begin{displaymath}
    \xymatrix{ \psi\colon\Gamma(E)\ar[r]^{\varepsilon_{\Gamma(E)}} & \Gamma(\osob E)\ar[r]^{\Gamma(i)} & \Gamma(\osob D)\ar[r]^{\bigcup} & \Gamma(D) }
  \end{displaymath}
  In order to show that these are inverses of each other we use the fact that $\varepsilon_{\Gamma(D)}(B)$ is \emph{strongly} closed which we established in Proposition~\ref{prop:union}(5). Since the concept of strong closure is purely order-theoretic we get that the direct image under~$i$ is again strongly closed. This is crucial as it allows us to invoke Lemma~\ref{lem:strong}(2). The computation thus reads:
  \begin{displaymath}
    \begin{array}{rcll}
      \psi\circ\phi &=& \bigcup\circ \Gamma(i)\circ \varepsilon_{\Gamma(E)}\circ  \bigcup\circ \Gamma(i^{-1})\circ \varepsilon_{\Gamma(D)} & \text{by definition}\\
      &=& \bigcup\circ \Gamma(i)\circ \Gamma(i^{-1})\circ \varepsilon_{\Gamma(D)} & \text{Lemma~\ref{lem:strong}(2)}\\
      &=& \bigcup\circ \varepsilon_{\Gamma(D)} & \text{$\Gamma$ is a functor}\\
      &=& \mathrm{id}_{\Gamma(D)}
    \end{array}
  \end{displaymath}
  The last equality follows from the fact that $\varepsilon_{\Gamma(D)}(B)$ contains all sets of the form $\da x$, $x\in B$.

  The other composition, $\phi\circ\psi$, simplifies in exactly the same way to the identity on~$\Gamma(E)$, and since all maps involved are order-preserving, we have shown that the pair~$\phi, \psi$ constitutes an order isomorphism between $\Gamma(D)$ and~$\Gamma(E)$.
\end{proof}

\subsection{Dominated dcpos}
Our new version of Question~\ref{prob: second problem} requires us to infer $D\cong E$ from $\osob D\cong\osob E$, and the most direct approach is to find a way to recognize \emph{purely order-theoretically} inside~$\osob D$ those elements which correspond to closed irreducible subsets of the form~$\da x$ with $x\in D$. As our counterexample~$\H$ to the Ho-Zhao problem will show, this is not possible for general dcpos. The purpose of the present section is to exhibit a class of dcpos for which the direct approach works.

\begin{defi}
  Given $A',A\in\osob D$ we write $A'\triangleleft A$ if there is $x\in A$ such that $A'\subseteq\da x$. We write $\tda A$ for the set $\set{A'\in\osob D}{A'\triangleleft A}$.
\end{defi}

Clearly, an element $A$ of~$\osob D$ is of the form $\da x$ if and only if $A\triangleleft A$ holds, but this is not yet useful since the definition of~$\triangleleft$ makes explicit reference to the underlying dcpo~$D$. We can, however, record the following useful facts:

\begin{prop}\label{prop:nabla-irr}
  Let $D$ be a dcpo and $A\subseteq D$ be closed and irreducible.
  \begin{enumerate}
  \item $A=\bigvee\nabla A$
  \item $\nabla A$ is irreducible as a subset of~$\osob D$.
  \end{enumerate}
\end{prop}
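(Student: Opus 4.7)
The plan is to prove both parts by combining the concrete description of $\triangleleft$ with the facts about $\osob D$ collected in Proposition~\ref{prop:union}. Throughout, I shall rely on the fact that closed sets in $\osob D$ are lower sets in the inclusion order, and that $A$, being Scott-closed in $D$, is itself a lower set of $D$.

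For part~(1), the key observation is that $\da x \in \nabla A$ for every $x \in A$, since the inclusion $\da x \subseteq \da x$ trivially witnesses $\da x \triangleleft A$. Conversely every $A' \in \nabla A$ satisfies $A' \subseteq \da x \subseteq A$ for some $x \in A$, so $\bigcup \nabla A = A$. Therefore $A$ is an upper bound of $\nabla A$ in $\osob D$, and any other upper bound $B \in \osob D$ must contain each $\da x$ with $x \in A$ and hence all of $A$. Consequently $A$ is the least upper bound of $\nabla A$ in $\osob D$, which is exactly $A = \bigvee \nabla A$.

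For part~(2), suppose $\nabla A \subseteq \mathcal{M} \cup \mathcal{N}$ with $\mathcal{M}, \mathcal{N}$ Scott-closed in $\osob D$. For each $x \in A$ we have $\da x \in \nabla A$, so $\da x \in \mathcal{M}$ or $\da x \in \mathcal{N}$, equivalently $x \in \bigcup \mathcal{M}$ or $x \in \bigcup \mathcal{N}$. Hence $A \subseteq \bigcup \mathcal{M} \cup \bigcup \mathcal{N}$, where by Proposition~\ref{prop:union}(3) both unions are Scott-closed subsets of~$D$. The irreducibility of $A$ in $D$ then forces one of these inclusions to hold outright, say $A \subseteq \bigcup \mathcal{M}$. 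To deduce $\nabla A \subseteq \mathcal{M}$, take any $A' \in \nabla A$ and choose $x \in A$ with $A' \subseteq \da x$; since $x$ lies in some $A_x \in \mathcal{M}$, we have $\da x \subseteq A_x$, and because $\mathcal{M}$ is a lower set of $\osob D$, both $\da x$ and $A'$ belong to $\mathcal{M}$.

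The only subtlety I anticipate is keeping the two layers of closure straight: Proposition~\ref{prop:union}(3) is what converts a Scott-closed collection $\mathcal{M} \subseteq \osob D$ into a Scott-closed set $\bigcup \mathcal{M} \subseteq D$, and this is precisely what transports the irreducibility of $A \subseteq D$ upwards to the irreducibility of $\nabla A \subseteq \osob D$. Once that correspondence is invoked, both parts reduce to bookkeeping about lower sets.
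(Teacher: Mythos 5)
Your proof is correct and follows essentially the same route as the paper: part~(1) rests on the observation that $\nabla A$ contains every $\da x$ with $x\in A$, and part~(2) is the covering argument of Lemma~\ref{lem:strong}(2) — push $\mathcal{M},\mathcal{N}$ down to $D$ via Proposition~\ref{prop:union}(3), apply irreducibility of $A$, and lift back up using that $\mathcal{M}$ is a lower set of $\osob D$. The extra step you supply at the end (passing from the principal ideals to arbitrary members of $\nabla A$ by downward closure) is exactly the adaptation the paper leaves implicit when it says the proof is ``essentially the same.''
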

\begin{proof}
  The first statement is trivial because $\nabla A$ contains all principal ideals $\da x$, $x\in A$. The proof of the second statement is essentially the same as that of Lemma~\ref{lem:strong}(2).
\end{proof}

\begin{defi}
  Let $D$ be a strongly complete partial order and $x',x\in D$. We write $x'\prec x$ if for all closed irreducible subsets~$A$ of~$D$, $x\leq\bigvee A$ implies $x'\in A$. We say that $x\in D$ is \emph{$\prec$-compact} if $x\prec x$, and denote the set of $\prec$-compact elements by $K(L)$.
\end{defi}

Note that this is an \emph{intrinsic} definition of a relation on~$D$ without reference to any other structure. It is reminiscent of the way-below relation of domain theory but note that it is defined via \emph{closed} irreducible sets. This choice has the following consequence, which is definitely not true for way-below:

\begin{prop}\label{prop:prec-downset}
  Let $D$ be an scpo and $x\in D$. The set $\set{a\in D}{a\prec x}$ is Scott-closed.
\end{prop}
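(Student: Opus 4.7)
The plan is to verify the two defining conditions of Scott-closedness for the set $L_x := \{a \in D \mid a \prec x\}$ directly from the definition of $\prec$, exploiting the fact that closed irreducible sets in a dcpo are in particular Scott-closed (hence lower sets closed under directed suprema).

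First I would show that $L_x$ is a lower set. Suppose $a \prec x$ and $b \leq a$; I want $b \prec x$. Take any closed irreducible $A \subseteq D$ with $x \leq \bigvee A$. By assumption $a \in A$, and since $A$ is Scott-closed it is downward closed, so $b \leq a$ gives $b \in A$. Hence $b \prec x$.

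Next I would show closure under directed suprema. Let $S \subseteq L_x$ be directed; since $D$ is a dcpo its join $\bigvee S$ exists in $D$, and I want $\bigvee S \prec x$. Given any closed irreducible $A$ with $x \leq \bigvee A$, each $s \in S$ satisfies $s \prec x$ and therefore $s \in A$. Since $A$ is Scott-closed it is closed under directed suprema, so $\bigvee S \in A$. This gives $\bigvee S \prec x$, as required.

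There is no real obstacle here: the key observation is simply that ``closed irreducible'' in the definition of $\prec$ entails Scott-closedness, which is exactly what is needed to transport both $\leq$-comparisons and directed joins into the witness set $A$. The scpo hypothesis is not even used for this particular statement, though it is what makes the relation $\prec$ interesting to consider in the first place.
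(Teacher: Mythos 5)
Your proof is correct and follows essentially the same argument as the paper's: given a directed set of elements $\prec$-below $x$ and a witness $A$ with $x\leq\bigvee A$, all the elements land in $A$ and Scott-closedness of $A$ forces the supremum in as well. The only difference is that you also spell out the lower-set condition (which the paper leaves implicit), and your closing remark that the scpo hypothesis is only needed to make $\prec$ well-defined is accurate.
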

\begin{proof}
  Let $(a_i)_{i\in I}$ be a directed set of elements, each of which is $\prec$-below~$x$. We need to show that $\dsup_{i\in I}a_i$ is also $\prec$-below~$x$. To this end let $A$ be a closed irreducible subset of~$D$ with $x\leq \bigvee A$. Since $a_i\prec x$ for all $i\in I$ we have that every $a_i$ belongs to~$A$ and because $A$ is closed, $\dsup_{i\in I}a_i\in A$ follows.
\end{proof}

Recall from Proposition~\ref{prop:union}(4) that $\osob D$ is strongly complete for any dcpo~$D$, so on $\osob D$ we can consider both $\triangleleft$ and~$\prec$. We observe:

\begin{prop}
  $A'\triangleleft A$ implies $A'\prec A$ for all $A,A'\in\osob D$.
\end{prop}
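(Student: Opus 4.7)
The plan is to unpack both relations directly and chain two lower-set arguments. Assume $A' \triangleleft A$, so we have a witness $x \in A$ with $A' \subseteq \da x$. To show $A' \prec A$ in the strongly complete poset $\osob D$, I take an arbitrary closed irreducible family $\mathcal{C} \in \Gamma(\osob D)$ such that $A \subseteq \bigvee \mathcal{C}$ (the order on $\osob D$ is inclusion), and I must exhibit $A'$ as a member of $\mathcal{C}$.

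The first key observation is that the supremum appearing in the definition of $\prec$ can be computed concretely. By Proposition~\ref{prop:union}(2) the union $\bigcup \mathcal{C}$ is irreducible in $D$, and by Proposition~\ref{prop:union}(3) it is already Scott-closed, hence it is itself the supremum $\bigvee \mathcal{C}$ in $\osob D$. Therefore $A \subseteq \bigcup \mathcal{C}$, and in particular $x \in \bigcup \mathcal{C}$, so there exists some $A'' \in \mathcal{C}$ containing $x$.

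Now I exploit that Scott-closed subsets of any dcpo are lower sets. Since $A''$ is a Scott-closed subset of $D$, it is a lower set in $D$, whence $\da x \subseteq A''$. Since $\mathcal{C}$ is Scott-closed in $\osob D$, it is a lower set with respect to inclusion, so from $\da x \subseteq A'' \in \mathcal{C}$ we conclude $\da x \in \mathcal{C}$, and from $A' \subseteq \da x$ we then conclude $A' \in \mathcal{C}$, as required.

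No step looks like an obstacle: the only subtlety is recognizing that the sup in $\osob D$ of a closed irreducible family is the plain union (not merely its closure), which is exactly what parts (2) and (3) of Proposition~\ref{prop:union} deliver. After that, the argument is just two applications of the lower-set property of Scott-closed subsets, one inside $D$ and one inside $\osob D$.
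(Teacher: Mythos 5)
Your proof is correct and follows essentially the same route as the paper's (one-line) argument: the whole point is that the supremum of a closed irreducible collection in $\osob D$ is the plain union by Proposition~\ref{prop:union}(2) and~(3), after which membership of $A'$ in the covering family follows from lower-set reasoning. (A tiny simplification: once you have $A'\subseteq\da x\subseteq A''$ you can conclude $A'\in\mathcal{C}$ directly, without the intermediate step for $\da x$.)
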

\begin{proof}
  This holds because the supremum of a closed irreducible collection of closed irreducible subsets is given by union as we saw in Proposition~\ref{prop:union}(2) and~(3).
\end{proof}

Our aim now is to give a condition for dcpos~$D$ which guarantees the reverse implication.

\begin{defi}
  A dcpo~$D$ is called \emph{dominated} if for every closed irreducible subset~$A$ of~$D$, the collection $\nabla A$ is Scott-closed in~$\osob D$.
\end{defi}

We are ready for the final technical step in our argument:

\begin{lem}
  A dcpo~$D$ is dominated, if and only if $A'\prec A$ implies $A'\triangleleft A$ for all $A,A'\in\osob D$.
\end{lem}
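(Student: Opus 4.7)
The plan is to prove both implications by direct calculation, exploiting the observation that $\nabla A$ is the canonical closed irreducible subcollection of $\osob D$ whose supremum is~$A$ (by Proposition~\ref{prop:nabla-irr}). The dominatedness hypothesis is then precisely what is required to feed $\nabla A$ into the defining condition of the relation~$\prec$ on the scpo~$\osob D$.

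For the forward direction, assume $D$ is dominated and let $A', A \in \osob D$ with $A' \prec A$. By Proposition~\ref{prop:nabla-irr}, $\nabla A$ is irreducible in $\osob D$ and $A = \bigvee \nabla A$; by dominatedness it is also Scott-closed. So $\nabla A$ is a closed irreducible subset of~$\osob D$ with $A \leq \bigvee \nabla A$, and the defining condition of $A' \prec A$ yields $A' \in \nabla A$, i.e., $A' \triangleleft A$.

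For the converse, assume the implication $\prec\,\Rightarrow\,\triangleleft$ holds in $\osob D$, and fix $A \in \osob D$; the task is to show that $\nabla A$ is Scott-closed in $\osob D$. It is a lower set, since $A'' \subseteq A' \subseteq \da x$ with $x \in A$ forces $A'' \subseteq \da x$. So let $(A'_i)_{i \in I}$ be a directed family in $\nabla A$ and let $A^*$ be its supremum in the scpo $\osob D$; it suffices, by hypothesis, to establish $A^* \prec A$. To that end, let $\mathcal{C} \subseteq \osob D$ be closed irreducible with $A \leq \bigvee \mathcal{C}$. By Proposition~\ref{prop:union}(2) and~(3), $\bigcup \mathcal{C}$ is a closed irreducible subset of $D$, so $\bigvee \mathcal{C} = \bigcup \mathcal{C}$ and $A \subseteq \bigcup \mathcal{C}$. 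For each $i$, pick $x_i \in A$ with $A'_i \subseteq \da x_i$; then $x_i \in C_i$ for some $C_i \in \mathcal{C}$, whence $A'_i \subseteq \da x_i \subseteq C_i$, and the lower-set property of $\mathcal{C}$ gives $A'_i \in \mathcal{C}$. Scott-closedness of $\mathcal{C}$ along the directed family $(A'_i)$ then forces $A^* \in \mathcal{C}$, as needed.

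I do not anticipate a genuine obstacle. The only subtle point is the identification of the supremum in $\osob D$ of a closed irreducible family with its union, which is exactly the content of Proposition~\ref{prop:union}(2)--(3); this has to be invoked explicitly at the crucial step so that membership of each $A'_i$ in $\mathcal{C}$ can be read off from pointwise membership in $\bigcup\mathcal{C}$.
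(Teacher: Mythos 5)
Your proof is correct and takes essentially the same approach as the paper: the direction ``dominated $\Rightarrow$ ($\prec$ implies $\triangleleft$)'' is argued identically via Proposition~\ref{prop:nabla-irr}. For the converse you verify Scott-closedness of $\nabla A$ by hand, which amounts to inlining the already-available facts that $\triangleleft$ implies $\prec$ and that $\set{B}{B\prec A}$ is Scott-closed (Proposition~\ref{prop:prec-downset}); the paper instead just cites these to identify $\nabla A$ with $\set{A'}{A'\prec A}$, but the underlying argument is the same.
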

\begin{proof}
  If: Let $A\in\osob D$. By assumption we have $\nabla A=\set{A'\in\osob D}{A'\triangleleft A} = \set{A'\in\osob D}{A'\prec A}$ and in Proposition~\ref{prop:prec-downset} we showed that the latter is always Scott-closed.

  Only if: Let $A'\prec A$. We know by Proposition~\ref{prop:nabla-irr}(2) that $\nabla A$ is irreducible; it is closed by assumption. We also know that $A=\bigvee\nabla A$ so it must be the case that $A'\in\nabla A$.
\end{proof}

We are ready to reap the benefits of our hard work. All of the following are now easy corollaries:

\begin{prop}
  For a dominated dcpo $D$, the only $\prec$-compact elements of~$\osob D$ are the principal ideals~$\da x$, $x\in D$. Also, the unit~$\eta_D$ is an order isomorphism from $D$ to~$K(\osob D)$.
\end{prop}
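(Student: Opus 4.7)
The plan is to read off both statements as direct consequences of the characterization lemma just proved, namely that for dominated $D$ the relations $\triangleleft$ and $\prec$ agree on $\osob D$. Once this equivalence is in hand, there is essentially no further technical work, so the main task is simply to lay out the two directions cleanly.

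First I would verify that every principal ideal $\da x$ is $\prec$-compact. This is immediate: $x\in\da x$ and $\da x\subseteq\da x$, so $\da x\triangleleft\da x$ by the very definition of $\triangleleft$, and the earlier proposition that $\triangleleft$ implies $\prec$ yields $\da x\prec\da x$. No use of domination is needed here.

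For the converse, suppose $A\in\osob D$ satisfies $A\prec A$. Invoking the preceding lemma (this is exactly where domination enters), we obtain $A\triangleleft A$, i.e.\ there exists $x\in A$ with $A\subseteq\da x$. Since $A$ is Scott-closed it is in particular a lower set, so $x\in A$ forces $\da x\subseteq A$; thus $A=\da x$. Combined with the previous paragraph this gives $K(\osob D)=\{\da x\mid x\in D\}$.

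It then remains to recognize $\eta_D$ as an order isomorphism onto $K(\osob D)$. By construction $\eta_D(x)=\da x$ lands in $K(\osob D)$, and the characterization above says this assignment is surjective. For the order-theoretic part, in any poset one has $x\leq y\iff\da x\subseteq\da y$, so $\eta_D$ is simultaneously order-preserving and order-reflecting, and in particular injective. Hence $\eta_D\colon D\to K(\osob D)$ is an order isomorphism. The only subtle point to flag is the use of the fact that elements of $\osob D$ are lower sets in the converse direction; everything else is purely formal once the $\triangleleft{=}\prec$ equivalence from the dominated hypothesis is available.
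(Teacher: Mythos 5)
Your proof is correct and follows exactly the route the paper intends: the paper states this as an ``easy corollary'' of the lemma characterizing dominated dcpos via $\prec\,\Rightarrow\,\triangleleft$ and gives no further argument, and your two directions (principal ideals are $\prec$-compact via $\triangleleft\Rightarrow\prec$; $\prec$-compact implies principal via domination plus the lower-set observation) together with the standard fact $x\leq y\iff\da x\subseteq\da y$ fill in precisely the omitted details.
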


\begin{thm}\label{dominated->faithful}
  Let $D$ and $E$ be dominated dcpos. The following are equivalent:
  \begin{enumerate}
  \item $D \cong E$.
  \item $\Gamma(D) \cong \Gamma(E)$.
  \item $\osob{D} \cong \osob{E}$.
  \end{enumerate}
\end{thm}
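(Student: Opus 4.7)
The plan is to close the cycle $(1)\Rightarrow(2)\Rightarrow(3)\Rightarrow(1)$, leaning on the machinery already built up. The implication $(1)\Rightarrow(2)$ is immediate from functoriality of $\Gamma$, since an order isomorphism $D\cong E$ is in particular a Scott homeomorphism of Scott spaces and hence induces a lattice isomorphism between their Scott-closed set lattices. The equivalence $(2)\Leftrightarrow(3)$ has already been established in the unlabelled proposition preceding the definition of dominated dcpos, so nothing new needs to be done for that edge.

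The only direction requiring the domination hypothesis is $(3)\Rightarrow(1)$. Here I would invoke the preceding proposition which says that, under domination, the $\prec$-compact elements of $\osob D$ are precisely the principal ideals $\da x$, and that the monad unit $\eta_D\colon D\to K(\osob D)$, $x\mapsto\da x$, is an order isomorphism; likewise for $E$. The crucial observation is that the relation $\prec$ on an scpo was defined in a purely intrinsic, order-theoretic manner (using only closed irreducible subsets and their suprema), so any order isomorphism $i\colon\osob D\to\osob E$ automatically preserves and reflects $\prec$. Consequently $i$ restricts to an order isomorphism $K(\osob D)\cong K(\osob E)$.

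Composing the three isomorphisms
\[
D \xrightarrow{\ \eta_D\ } K(\osob D) \xrightarrow{\ i\restriction_{K(\osob D)}\ } K(\osob E) \xrightarrow{\ \eta_E^{-1}\ } E
\]
yields the desired $D\cong E$, closing the cycle. There is no serious obstacle once the preceding lemmas and propositions are in hand; the proof is essentially a bookkeeping exercise in observing that $\prec$, being intrinsic, is transported along any order isomorphism of order sobrifications, and that domination forces the $\prec$-compact elements to be exactly the image of the unit. The only point worth being careful about is to verify explicitly that $i$ carries $K(\osob D)$ onto $K(\osob E)$ in both directions, which follows by applying the same argument to $i^{-1}$.
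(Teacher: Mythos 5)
Your proposal is correct and follows exactly the route the paper intends: the theorem is stated there as an ``easy corollary'' of the unlabelled proposition establishing $\Gamma(D)\cong\Gamma(E)\iff\osob D\cong\osob E$ together with the proposition identifying $K(\osob D)$ with $\eta_D(D)$ for dominated $D$, and your argument assembles precisely these ingredients. The one point you rightly flag---that $\prec$ is intrinsic to the order on $\osob D$ and hence transported by any order isomorphism, so that $i$ restricts to a bijection between the $\prec$-compact elements---is exactly the observation the paper relies on.
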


\begin{thm}
  The class $\mathbf{domDCPO}$ of dominated dcpos is $\Gamma$-faithful.
\end{thm}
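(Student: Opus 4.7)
The plan is to observe that this theorem is essentially an immediate restatement of the equivalence $(1)\Leftrightarrow(2)$ in Theorem~\ref{dominated->faithful}. Unwinding the definition of $\Gamma$-faithfulness, I need to show that for any two dominated dcpos $D$ and $E$, the assumption $\Gamma(D)\cong\Gamma(E)$ forces $D\cong E$. Since $\mathbf{domDCPO}$ is, by definition, a subclass of $\mathbf{DCPO}$, any two dominated dcpos satisfy the hypothesis of Theorem~\ref{dominated->faithful}, and the desired implication is exactly the content of $(2)\Rightarrow(1)$ there.

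Concretely, I would first fix arbitrary $D, E \in \mathbf{domDCPO}$ and assume $\Gamma(D) \cong \Gamma(E)$. Then I would invoke Theorem~\ref{dominated->faithful} to conclude $D \cong E$. Since the quantification is over all dominated dcpos, this is precisely the condition that $\mathbf{domDCPO}$ is $\Gamma$-faithful as defined after Problem~\ref{prob: second problem}.

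There is no real obstacle here because all the technical work has been carried out earlier: Proposition~\ref{prop:union} establishes the order-theoretic properties of the order sobrification, Lemma~\ref{lem:strong} shows that strongly closed collections are exactly the downsets of the counit, the dominated condition ensures $\prec$ and $\triangleleft$ coincide on $\osob D$, and the preceding proposition identifies $D$ with the set $K(\osob D)$ of $\prec$-compact elements of $\osob D$ via $\eta_D$. Together with the fact that $\prec$ and $K(-)$ are purely order-theoretic notions, this provides the chain $\Gamma(D)\cong\Gamma(E) \Rightarrow \osob D\cong\osob E \Rightarrow K(\osob D)\cong K(\osob E) \Rightarrow D\cong E$ that underlies Theorem~\ref{dominated->faithful}. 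The current theorem is therefore just the definitional repackaging of that chain, and the proof is a one-line appeal to the preceding result.
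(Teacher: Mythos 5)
Your proposal is correct and matches the paper exactly: the theorem is stated there as an immediate corollary of Theorem~\ref{dominated->faithful}, obtained by unwinding the definition of $\Gamma$-faithfulness and applying the implication $(2)\Rightarrow(1)$ to arbitrary $D,E\in\mathbf{domDCPO}$. Your sketch of the underlying chain $\Gamma(D)\cong\Gamma(E)\Rightarrow\osob D\cong\osob E\Rightarrow K(\osob D)\cong K(\osob E)\Rightarrow D\cong E$ is also the paper's intended route.
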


\subsection{Examples}
Let us now explore the reach of our result and exhibit some better known classes of dcpos which are subsumed by $\mathbf{domDCPO}$.

\begin{thm}
  \label{thm:dominated}
  The following are all dominated:
  \begin{enumerate}
  \item strongly complete partial orders;
  \item complete semilattices;
  \item complete lattices;
  \item dcpos which are sober in their Scott topologies;
  \item $\osob D$ for any dcpo~$D$.
  \end{enumerate}
\end{thm}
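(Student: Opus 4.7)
The plan is to verify, in each case, that for every closed irreducible $A\subseteq D$, the collection $\nabla A$ is closed under directed joins in $\osob D$; since $\nabla A$ is plainly a lower set, this suffices for Scott-closure.

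Cases~(1), (4), and~(5) I would dispatch together via the observation that whenever $A$ is a principal ideal $\da a$, one has $\nabla A=\varepsilon_{\Gamma(D)}(A)$, which is Scott-closed by Proposition~\ref{prop:union}(1). Indeed, in an scpo any $A\in\osob D$ contains its own supremum by Scott-closure, so $A=\da\bigvee A$; this handles~(1). Case~(4) is the very definition of sobriety applied to the Scott topology. And~(5) follows because Proposition~\ref{prop:union}(4) tells us that $\osob D$ is itself strongly complete for any dcpo~$D$, so it falls under case~(1).

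Case~(3) is a special instance of~(2), since every complete lattice is a complete semilattice, so the substantive work is case~(2). Here the principal-ideal reduction is unavailable: complete semilattices need not be sober in their Scott topology. I would argue directly. Given a closed irreducible $A$ and a directed family $(A_i)_{i\in I}$ in $\nabla A$, witness each inclusion $A_i\subseteq\da y_i$ with $y_i\in A$. Each $A_i$ is then bounded by $y_i$, so in a complete semilattice the supremum $x_i:=\bigvee A_i$ exists and satisfies $x_i\leq y_i$; since $A$ is a lower set containing $y_i$, we get $x_i\in A$. Directedness of $(A_i)$ transports to directedness of $(x_i)$, and Scott-closure of $A$ inside~$D$ then delivers $\bigvee_i x_i\in A$. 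Consequently $\bigcup_i A_i\subseteq\da\bigvee_i x_i$, a Scott-closed set, so the supremum of $(A_i)$ in $\osob D$ --- namely the closure of $\bigcup_i A_i$ by Proposition~\ref{prop:union}(2),(4) --- also lies in $\da\bigvee_i x_i$, and hence in $\nabla A$.

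The main obstacle is precisely case~(2): one must combine the bounded-completeness of $D$ (to manufacture each individual $x_i$) with the Scott-closure of $A$ inside~$D$ (to assemble $\bigvee_i x_i\in A$), all while never asking $A$ itself to admit a supremum --- which in a general complete semilattice it need not.
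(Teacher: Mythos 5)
There is a genuine gap, and it sits exactly where you claim the least work is needed. Your dispatch of case~(1) rests on the assertion that in an scpo every $A\in\osob D$ satisfies $A=\da\bigvee A$ ``by Scott-closure.'' This is false: Scott-closedness only guarantees that $A$ contains suprema of its \emph{directed} subsets, whereas an irreducible set need not be directed, so there is no reason for $\bigvee A$ to lie in $A$. The paper itself supplies a counterexample: $\osob\S$ is strongly complete by Proposition~\ref{prop:union}(4), yet the image $B$ of Johnstone's dcpo $\S$ under $\eta_\S$ is a Scott-closed irreducible subset of $\osob\S$ (it is the continuous image of the irreducible set $\S$) whose supremum is the top element $\S\in\osob\S$, which does not belong to $B$; hence $B$ is not a principal ideal. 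Indeed, if your claim were true, every scpo would be sober in its Scott topology and order sobrification would be idempotent, which the paper explicitly notes it is not. Since you derive case~(5) by reducing it to case~(1), that case inherits the same defect. Case~(4) survives, because for sober dcpos the identification of closed irreducible sets with principal ideals \emph{is} the definition of sobriety, and your observation that $\tda(\da a)=\varepsilon_{\Gamma(D)}(\da a)$ is correct and settles that case cleanly.

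The good news is that your own proof of case~(2) already contains the repair, and it is essentially the paper's central argument (which the paper runs for scpos first and then weakens to bounded suprema for complete semilattices). For a directed family $(A_i)_{i\in I}$ in $\tda A$ with witnesses $A_i\subseteq\da y_i$, $y_i\in A$, strong completeness gives $x_i=\bigvee A_i\in\da y_i\subseteq A$ without any appeal to boundedness, the $x_i$ form a directed family in the Scott-closed set $A$, and $\dsup_{i\in I}A_i\subseteq\da\dsup_{i\in I}x_i$ with $\dsup_{i\in I}x_i\in A$. Running this argument for scpos fixes~(1), and then~(5) follows from Proposition~\ref{prop:union}(4) exactly as you intended. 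Your treatment of~(2) and~(3) is correct as written.
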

\begin{proof}
  (1) If $A'\triangleleft A$ for closed irreducible subsets of an scpo, then by definition $A'\subseteq\da x$ for some $x\in A$ and hence $\bigvee A'\in A$. If $(A_i)_{i\in I}$ is a directed family in $\nabla A$, then $\dsup_{i\in I}A_i\subseteq\da\dsup_{i\in I}(\bigvee A_i)$, and the element $\dsup_{i\in I}(\bigvee A_i)$ belongs to~$A$ because $A$ is a Scott-closed set.

  (2) Complete semilattices are dcpos in which every bounded subset has a supremum. Now note that in the proof of (1) we only needed suprema of irreducible sets which are contained in a principal ideal~$\da x$.

  (3) This is a special case of~(2).

  (4) In a sober dcpo the only closed irreducible subsets are principal ideals, so every sober dcpo is strongly complete and the claim follows from~(1).

  (5) We showed in Proposition~\ref{prop:union}(4) that $\osob D$ is always strongly complete.
\end{proof}

Recall that a topological space is called \emph{coherent} if the intersection of any two compact saturated sets is again compact. It is called \emph{well-filtered} if whenever $(K_i)_{i\in I}$ is a filtered collection of compact saturated sets contained in an open~$O$, then some~$K_i$ is contained in~$O$ already.

\begin{lem}\label{lem: ub(A) is compact saturated}
  Let $D$ be a dcpo which is well-filtered and coherent in its Scott topology. Then for any nonempty $A \subseteq D$, the set~$\ub(A)$ of upper bounds of~$A$ is compact saturated.
\end{lem}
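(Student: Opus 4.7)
The plan is to exhibit $\ub(A)$ as a filtered intersection of compact saturated sets and then invoke well-filteredness.

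First, I would observe that for each single element $a \in D$, the principal filter $\ua a$ is compact saturated in the Scott topology: it is an upper set, hence saturated; and any Scott-open cover of $\ua a$ contains some open~$U$ with $a \in U$, and then $\ua a \subseteq U$ because $U$ is itself an upper set. Next, for any nonempty finite $F = \{a_1, \dots, a_n\} \subseteq A$, the identity $\ub(F) = \ua a_1 \cap \cdots \cap \ua a_n$ together with coherence (applied inductively) shows that $\ub(F)$ is compact saturated.

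Now consider the family $\mathcal{K} = \{\ub(F) \mid F \subseteq A,\; F \text{ finite and nonempty}\}$. It is filtered: given $F_1, F_2$, the union $F_1 \cup F_2$ is still finite and nonempty, and $\ub(F_1 \cup F_2) = \ub(F_1) \cap \ub(F_2)$. Since $A$ is nonempty, the family is nonempty. Moreover, the intersection of $\mathcal{K}$ is exactly $\ub(A)$, because $x \in \ub(A)$ iff $x$ is above every single element of $A$ iff $x$ is in $\ub(F)$ for every finite nonempty $F \subseteq A$.

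The final step is to show that the intersection $\ub(A) = \bigcap \mathcal{K}$ is itself compact saturated. Saturation is immediate since an intersection of upper sets is an upper set. For compactness, suppose $\ub(A) \subseteq \bigcup_{j \in J} U_j$ where each $U_j$ is Scott-open, and let $U = \bigcup_{j \in J} U_j$. Then $\bigcap \mathcal{K} \subseteq U$ with $\mathcal{K}$ a filtered family of compact saturated sets, so by well-filteredness some $\ub(F) \in \mathcal{K}$ is already contained in~$U$. Since $\ub(F)$ is compact, finitely many of the $U_j$ cover $\ub(F)$, and hence cover the smaller set $\ub(A)$. This completes the argument. The crux is the passage from the single-element case to the filtered intersection, where coherence delivers the finite intersections and well-filteredness delivers the infinite one; no step looks delicate once the family $\mathcal{K}$ is identified.
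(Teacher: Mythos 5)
Your proof is correct and follows essentially the same route as the paper's: both write $\ub(A)$ as the filtered intersection of the sets $\ub(F)=\bigcap_{a\in F}\ua a$ over finite nonempty $F\subseteq A$, use coherence to make each $\ub(F)$ compact saturated, and then apply well-filteredness to pass a given open cover down to some $\ub(F)$, whose compactness yields the finite subcover. The only difference is that you spell out a few routine details (compactness of $\ua a$, filteredness of the family) that the paper leaves implicit.
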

\begin{proof}
  Since $\ub(A) = \bigcap_{a \in A} \ua a$, it is saturated. For any finite nonempty $F \subseteq A$, $\bigcap_{a \in F} \ua a$ is compact since $D$ is coherent. Thus $\{\bigcap_{a \in F} \ua a \mid F \subseteq_{\rm finite} A\}$ forms a filtered family of compact saturated sets whose intersection equals~$\ub(A)$. If it is covered by a collection of open sets then by well-filteredness of~$D$ some $\bigcap_{a \in F} \ua a$ is covered already, but the latter is compact by coherence so a finite subcollection suffices to cover it.
\end{proof}

\begin{prop}\label{prop: filtered and coherent implies dominated}
  Every dcpo~$D$ which is well-filtered and coherent in its Scott topology is dominated.
\end{prop}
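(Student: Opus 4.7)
The target is to show that for every closed irreducible $A\subseteq D$, the collection $\nabla A=\{A'\in\osob D\mid A'\triangleleft A\}$ is Scott-closed in $\osob D$. That $\nabla A$ is a lower set with respect to inclusion is immediate from the transitivity of $\subseteq$, so the real work lies in establishing closure under directed suprema in $\osob D$.

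My plan is as follows. Let $(A_i)_{i\in I}$ be a directed family in $\nabla A$; its supremum in $\osob D$, call it $A_\infty$, is the Scott closure of $\bigcup_{i\in I}A_i$ by Proposition~\ref{prop:union}(2) together with the construction of suprema of irreducible families in the scpo $\osob D$. For each $i$, fix a witness $x_i\in A$ with $A_i\subseteq \da x_i$; equivalently, $x_i\in \ub(A_i)\cap A$. If I can extract a single element $x\in A\cap\bigl(\bigcap_{i\in I}\ub(A_i)\bigr)$, then $\bigcup_{i\in I}A_i\subseteq \da x$, and since $\da x$ is Scott-closed in $D$ it follows that $A_\infty\subseteq \da x$ with $x\in A$, giving $A_\infty\triangleleft A$ as desired.

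To find that common upper bound I would combine Lemma~\ref{lem: ub(A) is compact saturated} with well-filteredness. Each $\ub(A_i)$ is compact saturated by the lemma, and the family $\bigl(\ub(A_i)\bigr)_{i\in I}$ is filtered, since $(A_i)_{i\in I}$ is directed and $\ub$ is antitone in its argument. Suppose for contradiction that $\bigcap_{i\in I}\ub(A_i)\cap A=\emptyset$, i.e.\ $\bigcap_{i\in I}\ub(A_i)\subseteq D\setminus A$. Because $A$ is Scott-closed, $D\setminus A$ is Scott-open, and well-filteredness then supplies some $i_0\in I$ with $\ub(A_{i_0})\subseteq D\setminus A$, contradicting $x_{i_0}\in\ub(A_{i_0})\cap A$. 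Thus the required $x$ exists.

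I do not anticipate any significant obstacle: the compactness-and-coherence bookkeeping has already been packaged into Lemma~\ref{lem: ub(A) is compact saturated}, and the remainder is a textbook use of well-filteredness to turn a filtered intersection of compact saturated sets meeting a Scott-closed set into a single common witness.
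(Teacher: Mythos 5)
Your proposal is correct and follows essentially the same route as the paper: apply Lemma~\ref{lem: ub(A) is compact saturated} to get a filtered family of compact saturated sets $\ub(A_i)$, and use well-filteredness of $D\setminus A$ to derive a contradiction unless $\bigcap_{i\in I}\ub(A_i)$ meets $A$, yielding a common witness $x\in A$ dominating the supremum. The only difference is that you spell out a few routine steps (that $\nabla A$ is a lower set and that the directed supremum is the closure of the union, hence contained in $\da x$) which the paper leaves implicit.
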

\begin{proof}
  Let $(A_i)_{i\in I}$ be a directed family in $\nabla A$. By the preceding proposition, the sets $\ub{A_i}$ are compact saturated and form a filtered collection. Suppose for the sake of contradiction that $\bigcap_{i \in I} \ub(A_i) \subseteq D \backslash A$.  Since $D \backslash A$ is Scott open, by the well-filteredness of $D$ it follows that there exists $i_0 \in I$ such that $\ub(A_{i_0}) \subseteq D \backslash A$. This contradicts the fact that every $A_i$ is bounded by an element of~$A$.
\end{proof}

At the juncture, the curious reader must be wondering why we have not yet given an example of a dcpo which is not dominated. Now, as we showed, any dcpo whose Scott topology is sober is dominated and so a non-dominated dcpo, if it exists, must be non-sober. Peter Johnstone, \cite{johnstone81} was the first to give an example of such a dcpo and it is thus natural to wonder whether it is dominated or not. In order to answer this question we explore his example~$\S$ in some detail in the next section. This will also help us to construct and study our counterexample to the Ho-Zhao conjecture.

\section{Johnstone's counterexample revisited}
\label{sec:johnstone's-countereg}
We denote with $\ninfty$ the \emph{ordered set} of natural numbers
augmented with a largest element~$\infty$. When we write~$\Nat$ we
mean the natural numbers as a set, that is, \emph{discretely ordered}.
Johnstone's counterexample~$\S$, depicted in
Figure~\ref{fig:Johnstone's-non-sober-dcpo}, is based on the ordered
set $\Nat \times \ninfty$, that is, a countable collection of infinite
chains. We call the chain $C_m := \sset m\times\ninfty$ the
\emph{$m$-th component} of~$\S$. Similarly, for a fixed
element~$n\in\ninfty$ we call the set $L_n := \Nat\times\sset n$ the
\emph{$n$-th level}. So $(m,n)$ is the unique element in the $m$-th
component on level~$n$. We call the elements on the $\infty$-th level
\emph{limit points}, and refer to all the others as \emph{finite
  elements}.

The order on $\S$ is given by the product order plus the
stipulation that the limit point~$(m,\infty)$ of the $m$-th component
be above every element of levels $1$ to~$m$. We express this formally as follows (where $m,m',n,n'$ are natural numbers):
\begin{itemize}
\item $(m,n)<_1(m,n')$ if $n<n'$
\item $(m,n)<_2(m,\infty)$
\item $(m,n)<_3(m',\infty)$ if $n\leq m'$.
\end{itemize}
A simple case distinction shows that ${<}\defeq{<_1}\cup{<_2}\cup{<_3}$ is transitive and irreflexive, hence ${\leq}\defeq({<}\cup{=})$ is an order relation on~$\S$.

\begin{figure}[t]
\includegraphics[scale=0.8]{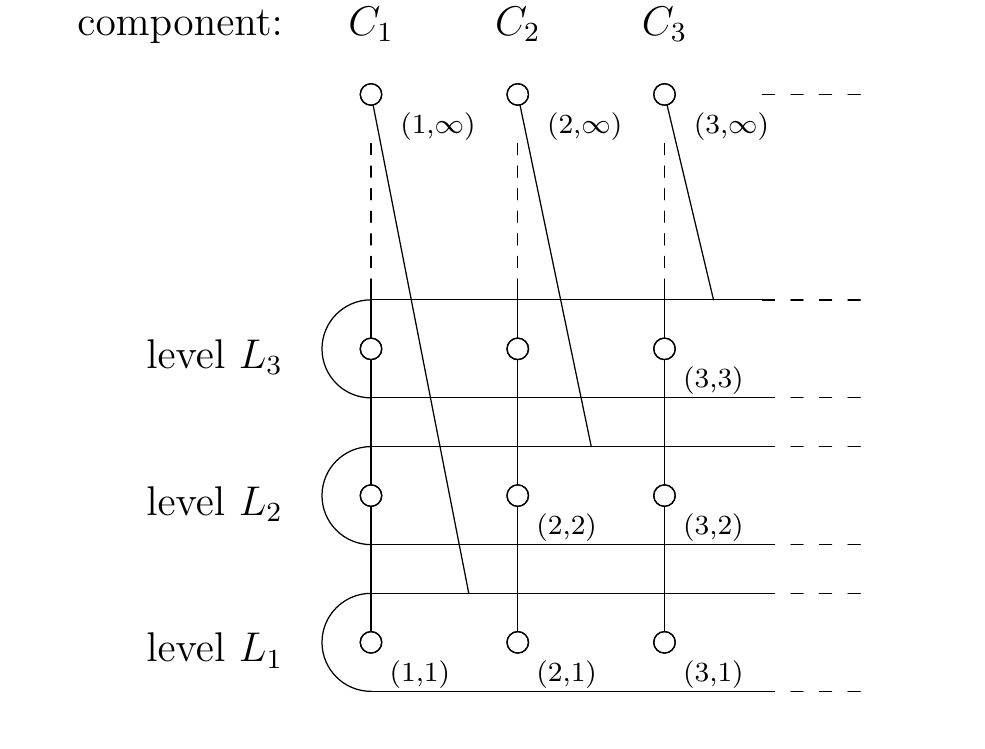}
\caption{\label{fig:Johnstone's-non-sober-dcpo}Johnstone's non-sober
  dcpo $\S$.}
\end{figure}

As the Hasse diagram makes clear, the only non-trivial (i.e., not containing their supremum) directed sets
of~$\S$ are the chains contained in a component, with supremum the
component's limit point. Since every finite element~$(m,n)$ is also
below the limit points~$(m',\infty)$, where $n\leq m'$, but not below
any other finite element outside its own component, we see that none
of them is \emph{compact} in the sense of domain theory. It follows
that $\S$ is highly non-algebraic, and indeed it couldn't be algebraic
as algebraic dcpos are always sober spaces in their Scott topology.

Let us take a closer look at the Scott topology on~$\S$. The two
defining conditions of Scott-closed sets manifest themselves in the
following properties:
\begin{itemize}
\item Closed sets are lower sets: if the set contains a limit
  point~$(m,\infty)$, then it must contain the component~$C_m$ and all
  levels $L_1,\ldots,L_m$.
\item Closed sets are closed under the formation of limits: if the set
  contains infinitely many elements of any one component then it must
  contain the limit point of that component.
\end{itemize}
Taken together, we obtain the following principle (which plays a
crucial role in our construction as well):
\begin{itemize}
\item[$(\dagger)$] If a Scott-closed subset of~$\S$ contains
  infinitely many limit points, then it equals~$\S$.
\end{itemize}
This is because such a set contains infinitely many levels by the
first property of Scott-closed sets, which means that it contains
infinitely many elements of every component, and therefore contains
the limit points of all components by the second property. Applying
the first property again we see that it contains everything.

It is now easy to see that $\S$ itself is an irreducible closed set:
If we cover $\S$ with two closed subsets then at least one of them
must contain infinitely many limit points. By $(\dagger)$ that set
then is already all of~$\S$. Of course, $\S$ is not the closure of a
singleton since there is no largest element, so we may conclude, as
Peter Johnstone did in \cite{johnstone81}, that $\S$ is not sober in
its Scott topology. In the terminology of the last section, it also
follows that $\S$ is not strongly complete. In preparation for our own counterexample, let us prove that $\S$ is the \emph{only} closed subset which is not a point closure.

\begin{prop}\label{p:irrS}
  The closed irreducible subsets of~$\S$ are $\S$ itself and the
  closures of singleton sets.
\end{prop}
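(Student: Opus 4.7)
The plan is to show that every closed irreducible $A \subseteq \S$ different from $\S$ has a maximum and therefore equals a principal ideal $\da x$. I would split by the set $M \defeq \set{m \in \Nat}{(m,\infty) \in A}$ of ``limit indices'' of $A$; by $(\dagger)$, the hypothesis $A \neq \S$ forces $M$ to be finite. As a preliminary observation, each intersection $A \cap C_m$ is a lower set in the chain $C_m$, and Scott-closedness rules out an infinite such intersection that fails to contain $(m,\infty)$. So $A \cap C_m$ is either empty, a finite initial segment $\{(m,1),\ldots,(m,k_m)\}$, or all of $C_m$, with the last happening exactly when $m \in M$.

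If $M = \emptyset$, every $A \cap C_m$ is finite. When $A$ meets only one component $C_{m_0}$, plainly $A = \da(m_0, k_{m_0})$. Otherwise I would split $A$ as $F_1 \cup F_2$ with $F_1 = A \cap C_{m_0}$ and $F_2 = A \setminus C_{m_0}$. In the absence of limit points in $A$, $F_2$ is a lower set (the only way an element of $A \setminus C_{m_0}$ could dominate something in $C_{m_0}$ is via $<_3$, which requires the upper element to be a limit), and it contains no infinite chains, so it is Scott-closed; both pieces being proper, this contradicts irreducibility.

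If $M \neq \emptyset$, let $m^\ast = \max M$. Downward closure forces $\da(m^\ast,\infty) \subseteq A$, and the claim is that equality holds. Assume otherwise; then $A$ contains either a second limit $(m_1,\infty)$ with $m_1 < m^\ast$, or a finite element outside $\da(m^\ast,\infty)$. In either sub-case I take $F_1 = \da(m^\ast,\infty)$ and $F_2$ equal to the Scott closure of $A \setminus \da(m^\ast,\infty)$, so that $F_1 \cup F_2 = A$ and $F_1 \subsetneq A$. It remains to show $F_2 \neq A$, i.e.\ that $(m^\ast,\infty) \notin F_2$.

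This final check is the main obstacle, because the relation $<_3$ permits elements of $C_{m^\ast}$ to re-enter $F_2$ during lower closure via the inequality $(m^\ast, n) < (m'',\infty)$ whenever $n \leq m''$. However, every surviving limit $(m'',\infty) \in A \setminus \da(m^\ast,\infty)$ has $m'' < m^\ast$ (by the maximality of $m^\ast$ in $M$), so only finitely many elements of $C_{m^\ast}$ are reinstated, no infinite chain in $C_{m^\ast}$ survives in $F_2$, and $(m^\ast,\infty)$ cannot be produced as a directed supremum. Being maximal in $\S$, it cannot enter by lower closure either. Hence $(m^\ast,\infty) \notin F_2$, contradicting irreducibility. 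Altogether $A$ is either $\S$ or a principal ideal, as claimed.
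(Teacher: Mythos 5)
Your argument is correct and it follows the same broad strategy as the paper's proof --- a case split on whether $A$ contains limit points, with $(\dagger)$ forcing the set $M$ of limit indices to be finite, followed by a decomposition of $A$ into proper closed subsets that contradicts irreducibility --- but your handling of the limit-point case is genuinely different. The paper lets $B$ be the set of maximal non-limit elements of $A$, checks that $\da B$ is closed, and writes $A=\da B\cup\bigcup_{x\in L}\da x$ as a union of finitely many closed sets, so that irreducibility yields $A=\da x$ for \emph{some} limit point $x$, without naming it. You instead identify the answer up front as $\da(m^\ast,\infty)$ with $m^\ast=\max M$, and pay for this by having to show that $F_2=\cl\bigl(A\setminus\da(m^\ast,\infty)\bigr)$ omits $(m^\ast,\infty)$. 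That is the one thin spot in your write-up: Scott closure is an iterative operation, and the assertion that ``no infinite chain in $C_{m^\ast}$ survives in $F_2$'' tacitly uses that the closure process creates no limit points beyond those already in $A$ (which is true, since $F_2\subseteq A$ and every limit point of $A$ other than $(m^\ast,\infty)$ has index at most $m_1\defeq\max(M\setminus\sset{m^\ast})<m^\ast$). The cleanest way to nail this down is to exhibit an explicit Scott-closed superset: let $W$ be $A$ with $(m^\ast,\infty)$ and all $(m^\ast,n)$ for $n>m_1$ removed (all of $C_{m^\ast}$ removed when $M=\sset{m^\ast}$). One checks directly that $W$ is a lower set, that its only infinite chains lie in components $C_m$ with $m\ne m^\ast$ and hence have their suprema in $W$, and that $W$ contains $A\setminus\da(m^\ast,\infty)$; therefore $F_2\subseteq W$ and $(m^\ast,\infty)\notin F_2$. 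With that detail pinned down your proof is complete. What your route buys is the exact identification of which principal ideal $A$ is; what the paper's route buys is that it never has to reason about the closure of a non-closed set. Your no-limit-point case (peeling off one component) is only a cosmetic variant of the paper's (peeling off the downset of one maximal element) and rests on the same observation that finite elements of distinct components are incomparable.
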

\begin{proof}
  Assume that $A$ is a closed irreducible proper subset of $\S$. We distinguish
  two cases:

  Case 1: $A$ contains no limit points. Let
  $(m,n)\in\Nat\times\Nat$ be a maximal element of~$A$. We claim
  that $B=A\setminus\da(m,n)$ is Scott-closed. Indeed, since $A$
  contains no infinite chains at all we don't need to worry about
  closure under limits. Downward closure holds because there is
  no order relationship between finite elements from different
  components of~$\S$. We therefore have the decomposition
  $A=B\cup\da(m,n)$ which shows that $A$ can only be irreducible if
  $B\subseteq\da(m,n)$ and $A=\da(m,n)=\cl(m,n)$.

  Case 2: $A$ contains limit points. Let $L$ be the non-empty set of those. By the principle~$(\dagger)$, $L$ is a finite set. Furthermore, let $B$ be the set of maximal elements of~$A$ which are not limit points. As in the previous paragraph, one sees that $\da B$ is a closed set. Hence we have the decomposition $A=\da B\cup\bigcup_{x\in L}\da x$ into finitely many closed sets. Irreducibility implies that $A$ is equal to one of the $\da x$ with $x\in L$.
\end{proof}

\begin{prop}\label{prop: Johnstone is dominated}
  Johnstone's dcpo $\S$ is dominated.
\end{prop}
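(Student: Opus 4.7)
The plan is to check the definition of dominated directly for each closed irreducible subset of $\S$. By Proposition~\ref{p:irrS}, $\osob\S$ consists of the principal ideals $\da x$ for $x\in\S$ together with one extra element, namely $\S$ itself (which lies strictly above every principal ideal since $\S$ has no maximum). So I only have two cases to handle: $A = \da x$ and $A = \S$.

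For $A = \da x$, I would observe that for any $A'\in\osob\S$, the condition $A'\triangleleft A$ (i.e.\ $A' \subseteq \da y$ for some $y \leq x$) is equivalent to $A'\subseteq \da x$, because $\da x$ is itself a principal ideal with maximum $x$. Hence $\nabla A = \varepsilon_{\Gamma(\S)}(\da x)$, which is Scott-closed in $\osob\S$ by Proposition~\ref{prop:union}(1).

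For $A = \S$, since $\S$ is not contained in any $\da y$, we have $\nabla\S = \{\da y \mid y \in \S\} = \osob\S\setminus\{\S\}$. This collection is a lower set in $\osob\S$ trivially. The heart of the matter is to show it is closed under directed suprema, i.e.\ that for any directed family $(\da y_i)_{i\in I}$ in $\osob\S$, its supremum there is again a principal ideal. Equivalently, I would verify that the unit map $\eta_\S : x \mapsto \da x$ is Scott-continuous, which amounts to checking $\overline{\bigcup_i \da y_i} = \da(\dsup_i y_i)$ in $\S$.

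The main work, and the place I expect the only real calculation, is that last closure equality. I would dispatch it by a case analysis on directed subsets of $\S$. A preliminary observation simplifies the analysis considerably: any directed subset of $\S$ that contains a limit point $(m,\infty)$ must have $(m,\infty)$ as a maximum, because limit points of $\S$ are maximal. Hence the only nontrivial case is an unbounded chain $y_i = (m,n_i)$ within a single component $C_m$, with $\dsup_i y_i = (m,\infty)$. Here $\bigcup_i \da y_i = \{(m,k) \mid k\in\Nat\}$, and taking the Scott closure first adds the limit point $(m,\infty)$, and then downward closure pulls in the levels $L_1,\ldots,L_m$, yielding precisely $\da(m,\infty)$ and no more; one verifies that no further limits or lower-closure steps add anything. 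In the remaining (easy) case the family has a maximum and there is nothing to compute. This completes the verification that $\nabla\S$ is Scott-closed, and hence that $\S$ is dominated.
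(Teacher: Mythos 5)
Your proof is correct and rests on the same key fact as the paper's: the only non-trivial directed families among the principal ideals in $\osob\S$ are the chains $(\da(m,n))_{n}$ inside a single component, with supremum the principal ideal $\da(m,\infty)$ of that component's limit point. The only difference is organizational --- you split into the two cases $A=\da x$ and $A=\S$ provided by Proposition~\ref{p:irrS} (handling the first via $\nabla(\da x)=\varepsilon_{\Gamma(\S)}(\da x)$ and Proposition~\ref{prop:union}(1), the second via Scott-continuity of $\eta_\S$), whereas the paper argues uniformly for every closed irreducible $A$ by observing that such a chain in $\nabla A$ forces $(m,\infty)\in A$ and hence $\da(m,\infty)\in\nabla A$.
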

\begin{proof}
  The only non-trivial directed sets in $\osob\S$ are the chains $(\da(m,n))_{n\in\Nat}$ and their supremum is $\da(m,\infty)$. If a closed irreducible subset~$A$ of~$\osob\S$ contains such a chain it must also contain its supremum, and because the supremum is a principal downset, it belongs to $\nabla A$.
\end{proof}

\begin{rem}
By Theorem~\ref{thm:dominated}(5), $\osob{\S}$ is dominated.  From the preceding result, $\S$ is dominated.  Hence we can invoke Theorem~\ref{dominated->faithful} and conclude that $\Gamma(\S) \not \cong \Gamma(\osob{\S})$ since $\S \not\cong \osob{\S}$.  Thus the pair of dcpos $\S$ and $\osob{\S}$ is not a counterexample to the Ho-Zhao conjecture.
\end{rem}

\begin{figure}[t]
\includegraphics[scale=0.8]{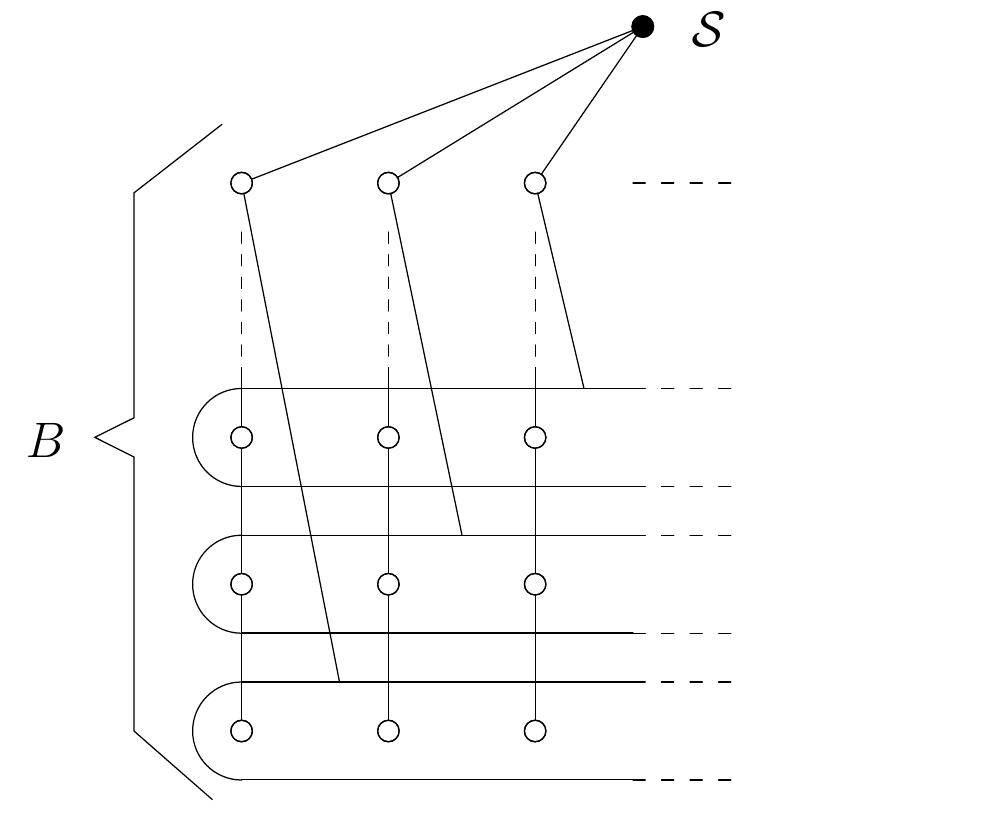}
\caption{\label{fig:sobrification-of-johnstone}The order sobrification
  $\osob\S$ of Johnstone's non-sober dcpo.}
\end{figure}

Proposition~\ref{p:irrS} allows us to draw the Hasse diagram of $\osob\S$, as we have done in Figure~\ref{fig:sobrification-of-johnstone}. We see that the extra point at the top of
$\osob S$ cannot be reached by a directed set; it is \emph{compact} in
the sense of domain theory. Also, the image of~$\S$ under~$\eta_\S$
forms a Scott-closed subset~$B$ of~$\osob\S$. Together this means that
the largest element of $\Gamma(\osob\S)$ is compact, while the largest
element of $\Gamma(\S)$ is not: it is the directed limit of the closed
sets $\da L_n$, $n\in\Nat$.

As an aside, we may also observe that the order sobrification of~$\osob\S$ would add yet another point, placed between the elements of~$B$ and the top element~$\osob \S$ which shows that order sobrification is not an idempotent process. In fact, our construction of~$\H$ addresses exactly this point,
by making sure that the new elements that appear in the order sobrification
are not compact and do not lead to new Scott-closed subsets.

\section{An informal description of the counterexample~$\H$}
\label{sec:hoxi-informal}
The construction of $\H$ may be viewed as an infinite process. We
begin with $\S$ and for every finite level~$L_n$
of~$\S$, we add another copy~$\S_n$, whose infinite
elements are identified with the elements of~$L_n$. No order relation
between the \emph{finite} elements of two different
$\S_n$,~$\S_{n'}$ is introduced. Now the process is
repeated with \emph{each} finite level of \emph{each}~$\S_n$,
adding a further $\Nat\times\Nat$ many copies of~$\S$. We keep
going like this \emph{ad infinitum} and in this way ensure all
elements are limit elements.

To make this a bit more precise, let $\Nat^*$ be the set of strings of
natural numbers. We write $\varepsilon$ for the empty string, $\cn ns$ for the result of adding the element~$n$ to the front of
string~$s$, and $ts$ for the concatenation of $t$ and~$s$.

We use $\Nat^*$ to index the many copies of Johnstone's example that
make up our dcpo~$\H$. In a first step, we let $\H'$ be the disjoint
union of all $\S_s$, $s\in\Nat^*$. We label individual elements
of~$\H'$ with triples $\tr mns\in\Nat\times\ninfty\times\Nat^*$ in the
obvious way. On $\H'$ we consider the equivalence relation~$\sim$
which identifies the finite element $\tr mns$ of~$\S_s$ with the
infinite element $\tr m\infty{\cn ns}$ of~$\S_{\cn ns}$, for all
$m,n\in\Nat$ and $s\in\Nat^*$. We may now define $\H$ as the quotient
of~$\H'$ by~$\sim$. The order on $\H$, which we will define formally in the next session, can be thought of as the quotient order,
that is, the smallest preorder such that the quotient map from $\H'$
to~$\H$ is monotone.

Furthermore, it can be shown that the Scott topology on $\H$ is the quotient topology of the Scott topology on~$\H'$. From this it follows that the characterisation of
Scott-closed subsets given in Section~\ref{sec:johnstone's-countereg}
above is still valid. The principle~$(\dagger)$ that we derived from
this, however, now has greater reach: Since every finite level in any
copy of~$\S$ is simultaneously the set of limit points of a subsequent copy,
it holds that a Scott-closed subset of~$\H$ cannot contain infinitely
many elements of \emph{any} level without containing all of them. A
close analysis of the situation (which we carry out in the next
section) then shows that $\H$ has one closed irreducible subset,
which is not a point closure, for \emph{every} $s\in\Nat^*$, to wit,
the set $\da L_s$.

\begin{figure}[t]
\includegraphics[scale=0.8]{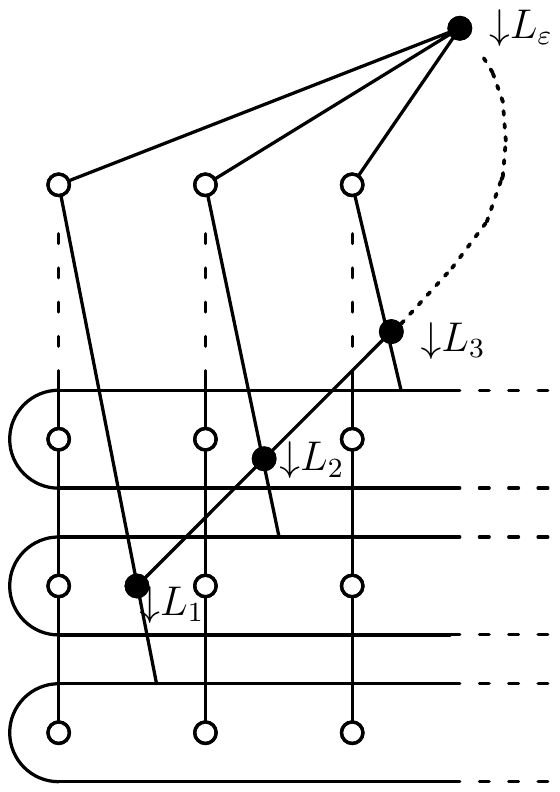}
\caption{\label{fig:sober-hoxi}The top part of the order sobrification
  of~$\H$.}
\end{figure}

The (inclusion) order among the $\da L_s$ is exactly as indicated in
Figure~\ref{fig:tree} which implies in particular that none of them
is compact in the sense of domain theory when viewed as an element of
the order sobrification~$\osob\H$. More precisely, we have that $\da L_s$ is
the limit (i.e., the closure of the union) of the chain~$\da L_{\cn
  1s}\subseteq \da L_{\cn 2s}\subseteq \da L_{\cn 3s}\subseteq\cdots$.

If we look just at the top part in the
order sobrification~$\osob\H$ then we obtain the structure displayed in
Figure~\ref{fig:sober-hoxi}. The set~$B$ of point closures (indicated
as open circles in Figure~\ref{fig:sober-hoxi}) no longer forms a
Scott-closed subset of~$\osob\H$: Because a closed set must be
downward closed, $\overline B$ contains all $\da L_n$, $n\in\Nat$, and
because it must be closed under taking limits, it contains
$\da L_{\el }$ as well, which means it equals all of~$\osob\H$.

The same considerations hold throughout $\osob\H$, and from this we will be able to conclude that the Scott topology of $\osob\H$ is the
same as the sobrification topology, which in turn is always isomorphic
to the topology of the original space, in our case the Scott topology
of~$\H$. Thus our analysis will show:

\begin{thm}\label{thm:H=sobH}
  The Scott topologies of $\H$ and~$\osob\H$ are isomorphic.
\end{thm}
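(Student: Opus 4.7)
The plan is to prove $\Gamma(\H)\cong\Gamma(\osob\H)$ via the counit $\varepsilon_{\Gamma(\H)}$, which by Proposition~\ref{prop:union}(1) is an order-preserving injection from $\Gamma(\H)$ into $\Gamma(\osob\H)$; isomorphism of the Scott topologies as complete lattices of closed sets will then follow immediately. Only surjectivity has to be established.

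Given an arbitrary $\mathcal{B}\in\Gamma(\osob\H)$, I set $C:=\bigcup\mathcal{B}$, which is Scott-closed in $\H$ by Proposition~\ref{prop:union}(3). Lemma~\ref{lem:strong}(1) already gives $\mathcal{B}\subseteq\varepsilon_{\Gamma(\H)}(C)$, so the real work is the reverse inclusion. I pick an arbitrary closed irreducible $A\subseteq C$ and aim to show $A\in\mathcal{B}$. Invoking the classification of closed irreducible subsets of $\H$ promised by Section~\ref{sec:hoxi-structure}, $A$ is either a principal ideal $\da x$ or one of the sets $\da L_s$ for some $s\in\Nat^*$.

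The principal-ideal case is immediate: $x\in C=\bigcup\mathcal{B}$ lies in some $A_0\in\mathcal{B}$, and down-closedness of $A_0$ in $\H$ together with down-closedness of $\mathcal{B}$ in $\osob\H$ yields $\da x\in\mathcal{B}$. The substantive case is $A=\da L_s$, where the plan is to realize $\da L_s$ as a directed supremum in $\osob\H$ of elements already forced to lie in $\mathcal{B}$. Two structural facts about $\H$ are required, both drawn from the detailed analysis of Section~\ref{sec:hoxi-structure}. First, for every $k\in\Nat$ and every $m\geq k$ the inclusion $\da L_{\cn k s}\subseteq\da(m,\infty,s)$ holds in $\H$, because the Johnstone relation $<_3$ within $\S_s$ places the whole of level $k$ below every limit point $(m,\infty,s)$ with $m\geq k$. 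Second, the chain $(\da L_{\cn k s})_{k\in\Nat}$ is directed in $\osob\H$ and has supremum $\da L_s$: any Scott-closed subset of $\H$ containing every $L_{\cn k s}$ must contain each component chain $((m,n,s))_{n\in\Nat}$ and hence its directed supremum $(m,\infty,s)$, so it contains the whole of $L_s$, and by down-closure all of $\da L_s$. Given these two facts, the hypothesis $L_s\subseteq C$ and the principal-ideal case together place each $\da(m,\infty,s)$ into $\mathcal{B}$; down-closedness of $\mathcal{B}$ then forces every $\da L_{\cn k s}$ into $\mathcal{B}$; and Scott-closedness of $\mathcal{B}$ finally delivers $\da L_s=\bigvee_{k}\da L_{\cn k s}\in\mathcal{B}$.

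The main obstacle is the second structural fact: the union $\bigcup_{k}\da L_{\cn k s}$ does \emph{not} itself contain the limit points of $L_s$, and it altogether misses the fractal descent into the copies $\S_{s'}$ that $\da L_s$ carries through the identifications defining $\sim$. What makes the supremum inside $\osob\H$ come out to $\da L_s$ rather than something smaller is that, once the Scott closure within $\H$ is taken, the single-component chains inside each $\S_s$ introduce $L_s$, after which down-closure supplies the fractal part automatically. Establishing this rigorously, together with the classification of closed irreducibles itself, is precisely the work of Section~\ref{sec:hoxi-structure}, on which the present argument entirely depends.
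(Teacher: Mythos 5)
Your proposal is correct and follows essentially the same route as the paper: both reduce the claim to showing that the counit $\varepsilon_{\Gamma(\H)}$ is a lattice isomorphism, both rely on the classification of closed irreducibles (point closures and the sets $\da L_s$), and both handle the crucial case $A=\da L_s$ by observing that $\da L_{\cn ms}\subseteq\da(m,s)$ forces each $\da L_{\cn ms}$ into the closed collection, whose directed supremum in $\osob\H$ is $\da L_s$. The only cosmetic difference is that the paper packages the inverse as $\eta^{-1}$ rather than as $\bigcup$, which coincides with your map on down-closed collections.
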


On the other hand, $\H$ and~$\osob\H$ are clearly not isomorphic
as ordered sets: the latter has a largest element whereas the former
does not. Thus we have:

\begin{cor}\label{cor:DCPO-not-faithful}
  The category $\mathbf{DCPO}$ is not $\Gamma$-faithful.
\end{cor}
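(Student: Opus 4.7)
The plan is to derive the corollary directly from Theorem~\ref{thm:H=sobH} together with a simple order-theoretic observation distinguishing $\H$ from $\osob\H$. Since being $\Gamma$-faithful requires the implication $\Gamma(D) \cong \Gamma(E) \Rightarrow D \cong E$ for all dcpos $D, E$ in the class, it suffices to exhibit a pair of dcpos with isomorphic Scott-closed-set lattices but non-isomorphic underlying posets. I claim the pair $(\H, \osob\H)$ does the job.

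First I would observe that Theorem~\ref{thm:H=sobH} yields $\Gamma(\H) \cong \Gamma(\osob\H)$ immediately, since an isomorphism of topological spaces (i.e., a homeomorphism between $\Sigma\H$ and $\Sigma\osob\H$) induces an order isomorphism between the corresponding lattices of closed sets. (Recall that $\Gamma$ of a Scott space is by definition the lattice of Scott-closed subsets.)

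It remains to argue that $\H \not\cong \osob\H$ as posets. For this I would point to the informal discussion preceding the theorem: the set $\da L_\varepsilon$ is a closed irreducible subset of $\H$ that contains every element, and hence is the largest element of $\osob\H$. By contrast, $\H$ itself has no largest element, since no single element of $\H$ dominates all of the finite levels $L_s$ across all $s \in \Nat^*$. Any order isomorphism would have to preserve the existence of a top, so $\H \not\cong \osob\H$.

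Putting these two facts together gives the corollary: $\H$ and $\osob\H$ are both dcpos with $\Gamma(\H) \cong \Gamma(\osob\H)$ yet $\H \not\cong \osob\H$, so $\mathbf{DCPO}$ fails the $\Gamma$-faithfulness condition. The only substantive ingredient is Theorem~\ref{thm:H=sobH}, which is established separately; everything else is a one-line verification, so there is no real obstacle at the level of this corollary.
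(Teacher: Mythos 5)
Your proposal is correct and matches the paper's own argument exactly: the paper likewise deduces $\Gamma(\H)\cong\Gamma(\osob\H)$ from Theorem~\ref{thm:H=sobH} and then notes that $\osob\H$ has a largest element (namely $\da L_\el=\H$) while $\H$ does not, so the two dcpos cannot be order-isomorphic. Nothing is missing.
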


\section{Formal arguments regarding $\H$}
\label{sec:hoxi-structure}
\subsection{The structure of $\H$ as an ordered set}
\label{subsec: hoxi as a poset}
Above, we introduced $\H$ as a quotient structure of $\Nat^*$-many copies of~$\S$. However, since each equivalence class contains exactly one infinite element $(m,\infty,s)$ it is more straightforward to work with these representatives rather than classes. In view of this we abbreviate $(m,\infty,s)$ to $(m,s)$, that is, we work with $\Nat\times\Nat^*$ as the underlying set of~$\H$.

In analogy to the definition of the order on~$\S$, Section~\ref{sec:johnstone's-countereg}, we define the following relations on $\Nat\times\Nat^*$ (where $m,m',n,n'\in\Nat$, $t,s\in\Nat^*$):\footnote{Recall our notational conventions regarding strings from Section~\ref{sec:hoxi-informal}.}
\begin{itemize}
\item $(m,\cn ns)<_1(m,\cn {n'}s)$ if $n<n'$
\item $(m,ts)<_2(m,s)$ if $t\ne\el$
\item $(m,ts)<_3(m',s)$ if $t\ne\el$ and $\min(t)\leq m'$
\end{itemize}

The following properties are straightforward to check (we use ``;'' for relation composition):
\begin{prop}\label{prop:char<}\leavevmode
  \begin{enumerate}
  \item $<_1$, $<_2$, and $<_3$ are transitive and irreflexive.
  \item ${<_1};{<_2}={<_2}$
  \item ${<_1};{<_3}\subseteq{<_3}$
  \item ${<_2};{<_3}\subseteq{<_3}$
  \item ${<_3};{<_2}\subseteq{<_3}$
  \item ${<}\defeq{<_1}\cup{<_2}\cup{<_3}\cup({<_2};{<_1})\cup({<_3};{<_1})$ is transitive and irreflexive.
  \item ${\leq}\defeq({<}\cup{=})$ is an order relation.
  \end{enumerate}
\end{prop}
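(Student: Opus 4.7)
The proposition is a sequence of combinatorial claims about the relations on $\Nat\times\Nat^*$, and my plan is to prove parts~(1)--(5) by direct unfolding of definitions and then use them as identities to reduce (6)--(7) to bookkeeping. For part~(1) I would treat each $<_i$ individually. Irreflexivity is immediate in all three cases: $<_1$ because $n<n$ fails in~$\Nat$; $<_2$ and $<_3$ because both require a nonempty prefix~$t$, which strictly decreases the length of the string coordinate. Transitivity of~$<_1$ reduces to transitivity of~$<$ on~$\Nat$, and that of~$<_2$ to the fact that a concatenation of nonempty strings is nonempty. The only content-bearing case is transitivity of~$<_3$: from $(m_1,t_1t_2s)<_3(m_2,t_2s)<_3(m_3,s)$ with $\min(t_1)\le m_2$ and $\min(t_2)\le m_3$, I would conclude the result via the observation $\min(t_1t_2)=\min(\min(t_1),\min(t_2))\le\min(t_2)\le m_3$.

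For parts~(2)--(5), each claim is a short chase through the definitions. The interesting direction is ``$\supseteq$'' in~(2): given $(m,ts)<_2(m,s)$ with $t=\cn{n}{u}$, the element $(m,\cn{n+1}{us})$ witnesses the factorization, since $(m,\cn{n}{us})<_1(m,\cn{n+1}{us})$ and $(m,\cn{n+1}{us})<_2(m,s)$ with nonempty prefix~$\cn{n+1}{u}$. Parts~(3)--(5) are routine manipulations of the prefix condition for~$<_3$ combined with the $\min$-inequality used in~(1).

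For part~(6), I would argue irreflexivity of~$<$ first: the three~$<_i$ are irreflexive by~(1), and for the composites $<_2;<_1$ and $<_3;<_1$ I would use a length count, since $<_2$ and~$<_3$ strictly decrease the length of the string coordinate while~$<_1$ preserves it, so any round trip through these composites is impossible. Transitivity of~$<$ then requires checking $R;R'\subseteq{<}$ for each of the $5\times 5$ pairs $R,R'$ drawn from the five summands. I would discharge all twenty-five cases by a mechanical reduction using the identities~(2)--(5) and transitivity of the individual~$<_i$ from~(1); for example, $(<_3;<_1);(<_3;<_1)=<_3;(<_1;<_3);<_1\subseteq{<_3;<_3;<_1}\subseteq{<_3;<_1}$, using~(3) and transitivity of~$<_3$. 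Part~(7) is then immediate: reflexivity of~$\leq$ is by definition, antisymmetry follows because $x<y<x$ would contradict irreflexivity in~(6), and transitivity reduces to~(6) by a trivial case split on whether any of the steps is an equality.

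The main obstacle I expect is the bookkeeping in the twenty-five compositions of part~(6); the identities~(2)--(5) are precisely what makes every composite collapse back into the five-term union defining~$<$, and the critical point is that~(2) is an \emph{equality}, so that in compositions such as $(<_2;<_1);<_2$ no stray~$<_1$ escapes to the right. The only genuinely non-routine insight is the length argument underpinning the irreflexivity of the composite relations $<_2;<_1$ and $<_3;<_1$, which is needed because~$<_1$ on its own does not change the length of the string coordinate.
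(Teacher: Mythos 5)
Your proposal is correct and is exactly the verification the paper leaves to the reader (the paper states only that these properties are ``straightforward to check'' and gives no proof). All the key points check out: the $\min(t_1t_2)=\min(\min(t_1),\min(t_2))$ computation for transitivity of $<_3$ and for (3)--(5), the witness $(m,\cn{n+1}{us})$ for the $\supseteq$ direction of (2), the collapse of all twenty-five composites into the five-term union via (2)--(5), and the string-length argument for irreflexivity of the composites ${<_2};{<_1}$ and ${<_3};{<_1}$.
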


The set $\Nat\times\Nat^*$ together with the order relation~$\leq$ is $\H$, the poset which we claim to be a counterexample to the Ho-Zhao conjecture.

\begin{rem}
  Given a fixed string~$s$, we can define the subset $S_s\defeq\set{(m,s)}{m\in\Nat}\cup\set{(m,\cn ns)}{m,n\in\Nat}$ of~$\Nat\times\Nat^*$. Under the assignment $(m,s)\mapsto(m,\infty)$, $(m,\cn ns)\mapsto(m,n)$ we obtain that the restriction of $<_1$, $<_2$, and $<_3$ to~$S_s$ corresponds precisely to the equally named relations that we used to define the order on~$\S$ at the beginning of Section~\ref{sec:johnstone's-countereg}. Thus we obtain order-isomorphic copies~$\S_s$ of~$\S$ inside~$\H$ for every $s\in\Nat^*$, which links our definition of~$\H$ to the discussion in the previous section. No further use will be made of this fact, however.
\end{rem}

\subsection{The structure of $\H$ as a dcpo}
\label{subsec: hoxi as a dcpo}
The set $\Nat\times\Nat^*$ is countable; therefore every directed set of~$\H$ contains a cofinal $\Nat$-indexed chain and we can restrict attention to the latter when discussing suprema of directed sets, the Scott topology, and Scott-continuous functions. We call such a chain \emph{non-trivial} if it is strictly increasing and hence does not contain a largest element. In fact, we can restrict further:

\begin{prop}\label{prop:cofinal chain}
  Every non-trivial chain in~$\H$ contains a cofinal chain of the form $(m,\cn ns)_{n\in N}$ where $m\in\Nat$ and $s\in\Nat^*$ are fixed and $N$ is an infinite subset of~$\Nat$.
\end{prop}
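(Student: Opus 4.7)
The plan is to reduce the statement to a simple depth argument. Given a non-trivial chain $C$ in $\H$, I would first invoke the countability of $\Nat\times\Nat^*$ and the fact that $C$ has no largest element to extract a strictly increasing, $\Nat$-indexed, cofinal subchain $(x_i)_{i\in\Nat}$ with $x_i=(m_i,s_i)$. Any cofinal subchain of this will remain cofinal in $C$, so it suffices to locate a cofinal tail of $(x_i)$ of the prescribed shape.

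The key quantity is the \emph{depth} $|s|$ of an element $(m,s)\in\H$, i.e., the length of the string part. Inspecting the five subrelations whose union is $<$ by Proposition~\ref{prop:char<}(6), one observes that $<_1$ preserves depth (both sides have string $\cn ns$ respectively $\cn{n'}s$, so equal length), whereas $<_2$, $<_3$, ${<_2};{<_1}$, and ${<_3};{<_1}$ each strictly decrease depth: the $<_2$ and $<_3$ relations pass from $ts$ to $s$ with $t\neq\el$, so strip off $|t|\geq 1$ letters, and a subsequent $<_1$ step does not change the length. Consequently the depth sequence $(|s_i|)_{i\in\Nat}$ is weakly decreasing in $\Nat$ and hence eventually constant; pick $i_0$ with $|s_i|=d$ for every $i\geq i_0$.

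For $i\geq i_0$ the step $x_i<x_{i+1}$ must use the one subrelation that preserves depth, namely $<_1$. Unfolding $<_1$, this means $m_{i+1}=m_i$, and there is a common tail $s^{(i)}\in\Nat^*$ and naturals $n_i<n_{i+1}$ with $s_i=\cn{n_i}{s^{(i)}}$ and $s_{i+1}=\cn{n_{i+1}}{s^{(i)}}$. Iterating, the first coordinate is constant equal to some $m\in\Nat$ and the tails $s^{(i)}$ coincide for all $i\geq i_0$ with a common string $s\in\Nat^*$. The subchain $(x_i)_{i\geq i_0}$ is therefore of the form $(m,\cn n s)_{n\in N}$ with $N=\{n_i:i\geq i_0\}$, and $N$ is infinite because the $n_i$ are strictly increasing; being a tail of a cofinal chain in $C$, it is itself cofinal in $C$.

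There is no genuine obstacle; the only step that needs care is the five-way case analysis showing that $<_1$ is the unique depth-preserving generator of $<$, and the rest follows by direct unfolding of definitions.
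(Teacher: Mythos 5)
Your proof is correct and follows essentially the same route as the paper's: both arguments observe that among the five generating subrelations of $<$ only $<_1$ preserves the length of the string component while the other four strictly decrease it, so the length is eventually constant along the chain and all subsequent steps are $<_1$, which forces the stated shape. Your write-up merely makes explicit the preliminary extraction of an $\Nat$-indexed cofinal subchain (which the paper handles in the paragraph preceding the proposition) and the unfolding of $<_1$.
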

\begin{proof}
  Let $(m_1,s_1)<(m_2,s_2)<\cdots$ be a non-trivial chain. Each relationship $(m_i,s_i)<(m_{i+1},s_{i+1})$ has to be one of the five types listed in item~(6) of Proposition~\ref{prop:char<}. All of these, except $<_1$, strictly reduce the length of the string~$s_i$, so they can occur only finitely often along the chain. Therefore, from some index~$i_0$ onward, the connecting relationship must always be~$<_1$ which implies that the shape of the entries $(m_i,s_i)$, $i\geq i_0$, is as stated.
\end{proof}

\begin{prop}\label{p:H-is-a-dcpo}
  $\H$ is a dcpo. More precisely, the supremum of a non-trivial chain $(m,\cn ns)_{n\in N}$, $N\subseteq\Nat$, is $(m,s)$.
\end{prop}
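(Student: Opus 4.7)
The plan is to establish the more precise second assertion about suprema; the first part (that $\H$ is a dcpo) then follows immediately, because by Proposition~\ref{prop:cofinal chain} every non-trivial directed subset of $\H$ admits a cofinal chain of the stated shape, which has the same supremum. That $(m, s)$ is an upper bound of $(m, \cn{n}{s})_{n \in N}$ is straightforward: applying the definition of $<_2$ with the singleton prefix $t = (n)$ yields $(m, \cn{n}{s}) <_2 (m, s)$ for every $n \in N$.

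For the \emph{least} upper bound part, I would fix an arbitrary upper bound $(m', s')$ and prove $(m, s) \leq (m', s')$ by a case analysis on the decomposition of $<$ from Proposition~\ref{prop:char<}(6). For each $n \in N$ the relation $(m, \cn{n}{s}) \leq (m', s')$ is either equality or falls into one of the five types $<_1$, $<_2$, $<_3$, ${<_2};{<_1}$, ${<_3};{<_1}$; since $N$ is infinite and hence unbounded, I would choose $n$ strictly larger than $m'$ and than any numerical parameter extractable from $s'$. This choice immediately rules out equality, the case $<_1$ alone (which would force $s' = \cn{n'}{s}$ with $n < n'$), the instance of $<_3$ with $s' = s$ (which would force $n \leq m'$), and the instance of ${<_3};{<_1}$ whose intermediate element coincides with $s$. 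The surviving cases leave only a short list of possible shapes for $s'$: either $s' = s$, or $s'$ is a proper suffix of $s$, or $s' = \cn{a'}{\mathrm{tail}(w)}$ for some string $w$ that is either $s$ or a proper suffix of $s$, with $a' > \mathrm{head}(w)$; each shape is paired either with $m = m'$ or with a $\min$-condition on the prefix removed from $s$. In every one of these shapes, a direct check confirms $(m, s) \leq (m', s')$, witnessed by equality, $<_2$, $<_3$, $<_1$, or one of the composites ${<_2};{<_1}$ or ${<_3};{<_1}$ applied to $(m, s)$ with the same intermediate element $w$.

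The main technical hurdle I anticipate is the combinatorial bookkeeping for the two composite relations: one must argue that, for infinitely many $n$, the intermediate string $w$ in the decomposition of $(m, \cn{n}{s}) < (m', s')$ can be chosen independently of $n$. This follows from the observation that the only proper suffixes of $\cn{n}{s}$ that remain candidates as $n$ varies are $s$ itself and the proper suffixes of $s$. Once $w$ is so identified, the very same $w$ serves as the intermediate element in the corresponding decomposition of $(m, s) < (m', s')$, closing the argument.
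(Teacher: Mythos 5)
Your proposal is correct and takes essentially the same route as the paper: $(m,s)$ is shown to be an upper bound via $<_2$, and an arbitrary upper bound $(m',s')$ is handled by a case analysis over the five constituents of $<$ from Proposition~\ref{prop:char<}(6), using the unboundedness of $N$ to eliminate the cases that would force $n\leq m'$ or $s'=\cn{n'}{s}$, and reducing the composite cases to an intermediate string that is a suffix of $s$ and hence serves again as the intermediate for $(m,s)$ itself. The only cosmetic difference is that the paper argues that one of the five relation types must occur infinitely often and then reasons per type, whereas you fix a single sufficiently large $n$; both versions are sound.
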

\begin{proof}
  It is clear that all elements in the chain are related to $(m,s)$ by $<_2$, so $(m,s)$ is an upper bound. If $(m',s')$ is another upper bound then the chain elements must be related to it by one of the five types listed in item~(6) of Proposition~\ref{prop:char<}. At least one of the five types must be used infinitely often and we can proceed by case distinction: Consider $<_1$; this implies that $(m',s')$ has shape $(m,\cn ks)$ with $k$ larger than all $n\in N$, which is impossible since $N$ is an infinite subset of~$\Nat$. If $(m,\cn ns)<_2(m',s')$ infinitely often (or even just once) then $m=m'$ and $s'$ is a suffix of $s$; hence either $(m,s)$ is equal to $(m',s')$, or $(m,s)<_2(m',s')$.

Next consider the case where $(m,\cn ns)<_3(m',s')$ holds infinitely often. Again, $s'$ must be a suffix of $s$, say $s=ts'$, and $m'\geq\min(\cn nt)$ for all $n\in N$. Because $N$ is infinite, we must have $m'\geq\min(t)$ and in particular, $t\ne\el$. It follows that $(m,s)<_3(m',s')$.

The final two cases, ${<_2};{<_1}$ and ${<_3};{<_1}$, are similar to $<_2$ and $<_3$, respectively.
\end{proof}

\subsection{The Scott topology on~$\H$}
\label{subsec: Scott topology on hoxi}
Propositions \ref{prop:cofinal chain} and~\ref{p:H-is-a-dcpo} allow us to characterize the Scott-closed subsets of~$\H$ as follows:

\begin{prop}
  A subset~$A$ of ~$\H$ is Scott-closed if it is downward closed and if it contains $(m,s)$ for every non-trivial chain $(m,\cn ns)_{n\in N}$ contained in~$A$.
\end{prop}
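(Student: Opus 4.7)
The plan is to treat this as a characterization and prove both directions, the nontrivial content being the ``if'' direction asserted in the statement.

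For the easy direction (Scott-closed implies the two conditions): downward closure is part of the definition of Scott-closedness, and any chain of the form $(m,\cn ns)_{n\in N}$ is a directed subset of~$\H$ whose supremum is $(m,s)$ by Proposition~\ref{p:H-is-a-dcpo}, so if such a chain is contained in a Scott-closed~$A$ then $(m,s)\in A$.

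For the substantive direction, let $A\subseteq\H$ be downward closed and satisfy the hypothesis on special chains. I need to show that $A$ is closed under suprema of arbitrary directed subsets. So let $D\subseteq A$ be directed with supremum~$d$. If $D$ has a greatest element then $d\in D\subseteq A$ and we are done. Otherwise, as noted at the opening of Section~\ref{subsec: hoxi as a dcpo}, the countability of $\H$ guarantees that $D$ contains a cofinal $\Nat$-indexed chain, and we may further thin it to a strictly increasing (non-trivial) chain in~$D$.

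Applying Proposition~\ref{prop:cofinal chain} to this non-trivial chain yields a cofinal subchain of the form $(m,\cn ns)_{n\in N}$ for fixed $m\in\Nat$, $s\in\Nat^*$, and some infinite $N\subseteq\Nat$. Being a subset of $D$, this subchain lies in~$A$, so the hypothesis gives $(m,s)\in A$. By Proposition~\ref{p:H-is-a-dcpo} the supremum of the subchain is $(m,s)$, and since a cofinal subchain shares its supremum with the enclosing directed set we conclude $d=(m,s)\in A$. The argument is essentially a routine assembly of the two preceding propositions together with the countability remark; there is no genuine obstacle beyond checking that the thinning from an arbitrary directed set, to a non-trivial chain, to a special-form subchain, preserves both containment in~$A$ and the supremum~$d$.
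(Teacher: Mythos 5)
Your proof is correct and is essentially the argument the paper intends: the paper states this proposition without proof, noting only that it follows from Propositions~\ref{prop:cofinal chain} and~\ref{p:H-is-a-dcpo}, and your write-up is precisely the routine assembly of those two results (reduce a directed set without greatest element to a cofinal strictly increasing chain via countability, extract a cofinal subchain of the special form, and use that cofinal subchains preserve suprema).
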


Given a string $s\in\Nat^*$, we call the set $L_s\defeq\set{(m,s)}{m\in\Nat}$ a \emph{level} of~$\H$. We have the following analogue to the principle~$(\dagger)$:
\begin{itemize}
\item[$(\ddagger)$] If a Scott-closed subset of~$\H$ contains
  infinitely many elements of a level~$L_s$, then
  it contains all of~$L_s$.
\end{itemize}
The proof is as before: Since $A$ contains infinitely many elements $(n,s)$, $n\in N\subseteq\Nat$, it contains infinitely many levels $L_{\cn ns}$, $n\in N$, because the elements of $L_{\cn ns}$ are below $(n,s)$ by relation~$<_3$. This implies that for any $m\in\Nat$ we have the non-trivial chain $(m,\cn ns)_{n\in N}$ in~$A$. Because $A$ is Scott-closed, the supremum $(m,s)$ must also also belong to~$A$. We get that $L_s\subseteq A$.

Alternately, we can express $(\ddagger)$ by saying that $L_s$ together with the restriction of the Scott topology is homeomorphic to $\Nat$ with the cofinite topology.

\subsection{The irreducible subsets of~$\H$, part~1}
\label{subsec: irreducible set in hoxi}
We are ready to exhibit certain Scott-closed subsets of $\H$ as irreducible:

\begin{prop}\label{p:characterisation}
  The following are closed irreducible subsets of $\H$:
  \begin{enumerate}
  \item downsets of individual elements of~$\H$;
  \item downsets of levels~$L_s$.
  \end{enumerate}
\end{prop}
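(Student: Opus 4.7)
The plan is to handle (1) and (2) separately, using Proposition~\ref{prop:char<} to understand the order, Proposition~\ref{p:H-is-a-dcpo} to identify directed suprema, and the principle~$(\ddagger)$ for irreducibility.

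For (1), Scott-closedness of $\da x$ is immediate: downward closure is by construction, and any non-trivial chain contained in $\da x$ is bounded above by~$x$, so its supremum remains below~$x$. Irreducibility is the standard singleton-closure argument: if $\da x\subseteq B\cup C$ with $B,C$ Scott-closed, then $x$ itself lies in $B$ or in~$C$, and downward closure propagates this inclusion to all of $\da x$.

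For (2), downward closure of $\da L_s$ is immediate. For closure under non-trivial directed suprema I would invoke Proposition~\ref{prop:cofinal chain} to reduce to a cofinal chain of the form $(m,\cn n{s'})_{n\in N}$, whose supremum is $(m,s')$ by Proposition~\ref{p:H-is-a-dcpo}, and then show $(m,s')\in\da L_s$. For each $n\in N$ I would fix a witness $m_n$ with $(m,\cn n{s'})\leq(m_n,s)$ and case-split on which of the five clauses of Proposition~\ref{prop:char<}(6) applies; a string-length comparison shows that most configurations of $s'$ relative to $s$ admit at most finitely many such~$n$, leaving essentially two scenarios compatible with an infinite chain lying in~$\da L_s$: either $s$ is a (possibly trivial) suffix of $s'$, in which case $(m,s')\leq(m,s)$ via $<_2$ or equality; or $s$ and $s'$ share a common tail $t$ and differ only in first letter, with $s'$'s first letter strictly below that of $s$, in which case $(m,s')\leq(m,s)$ via~$<_1$. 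Either way $(m,s')\in\da L_s$. Irreducibility of $\da L_s$ then follows quickly from~$(\ddagger)$: if $\da L_s\subseteq B\cup C$ with $B,C$ Scott-closed, then $L_s\subseteq B\cup C$, and $(\ddagger)$ forces each of $B\cap L_s$ and $C\cap L_s$ to be either finite or all of $L_s$; since $L_s$ is infinite, one of them must equal $L_s$, and downward closure of the witnessing set then gives $\da L_s\subseteq B$ or $\da L_s\subseteq C$.

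The main obstacle is the string-bookkeeping in the closure-under-suprema step for~(2): the compositions $<_2;{<_1}$ and $<_3;{<_1}$ appearing in Proposition~\ref{prop:char<}(6) could, at first glance, thread an infinite chain into $\da L_s$ via cleverly varying witnesses~$m_n$, and one has to verify that after dividing out the length constraints on the strings $\cn n{s'}$ versus~$s$ the only genuinely recurring configurations of~$s,s'$ are the two described above.
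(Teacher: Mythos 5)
Your proposal follows essentially the same route as the paper's proof: part~(1) via closures of singletons, irreducibility in part~(2) via the principle~$(\ddagger)$, and Scott-closedness of $\da L_s$ via reduction to a cofinal chain $(m,\cn n{s'})_{n\in N}$ followed by a case split on which of the five clauses of Proposition~\ref{prop:char<}(6) recurs infinitely often (the paper likewise rules out $<_1$ by unboundedness of~$N$ and treats the composite clauses as variants of $<_2$ and~$<_3$). The argument is correct.

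One small correction to the bookkeeping you yourself flag as the main obstacle: the composite clauses ${<_2};{<_1}$ and ${<_3};{<_1}$ admit a third recurring configuration besides the two you list, namely $s'=t\cn kv$ with $t\ne\el$ and $s=\cn{k'}v$, $k<k'$, in which $s$ is neither a suffix of $s'$ nor obtained from $s'$ by changing only its first letter. Nothing breaks --- the supremum still lands in $\da L_s$ because $(m,s')<_2(m,\cn kv)<_1(m,s)\in L_s$ --- but your claim that only the two configurations you describe can recur is not quite accurate, and you would want to record this extra case when writing out the composite clauses in full.
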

\begin{proof}
  Sets of the first kind can also be seen as closures of singleton
  subsets and such sets are always irreducible. Sets of the second
  kind are clearly irreducible by principle~$(\ddagger)$. The only
  concern is whether they are closed. For this we argue in a fashion similar to the proof of Proposition~\ref{p:H-is-a-dcpo}: Let $(m,\cn nu)_{n\in N}$ be a non-trivial chain contained in $\da L_s$. Each element $(m,\cn nu)$ of the chain is strictly below some element $(m',s)\in L_s$. We once again invoke item~(6) of Proposition~\ref{prop:char<} to conclude that one of the five types listed there must occur infinitely often. It can't be~$<_1$ because $N$ is unbounded. If it is $<_2$, then either $u=s$ and the supremum $(m,u)$ belongs to the level~$L_s$ itself, or $s=tu$ in which case $(m,u)<_2(m,s)\in L_s$. The case $<_3$ is a bit more interesting: if we have $(m,\cn nu)<_3(m',s)$ infinitely often (where $m'$ is allowed to depend on~$n$), then $s$ must be equal to~$u$ or a proper suffix of it. In the first case, the supremum $(m,u)$ belongs to $L_s$ and in the second case, $(m,u)<_2(m,s)\in L_s$.

The arguments for the final two cases, ${<_2};{<_1}$ and ${<_3};{<_1}$, are similar to those for $<_2$ and $<_3$, respectively.
\end{proof}

We will soon show that the sets listed above are the \emph{only} closed irreducible subsets of~$\H$. At this point, however, we want to deliver on our promise, made at the end of Section~\ref{sec: positive}, of giving an example of a non-dominated dcpo.

\begin{prop}
\label{prop: hoxi is not dominated}
The dcpo $\H$ is not dominated.
\end{prop}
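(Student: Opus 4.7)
The plan is to exhibit a single closed irreducible set $A \subseteq \H$ such that $\nabla A$ fails the closure-under-directed-suprema test inside $\osob\H$. A natural candidate, foreshadowed by the informal description in Section~\ref{sec:hoxi-informal}, is $A \defeq \da L_{\el}$. Note that $\da L_{\el} = \H$ itself: every $(m,s)$ with $s\neq\el$ satisfies $(m,s)<_2(m,\el)$ by the definition of $<_2$. By Proposition~\ref{p:characterisation}(2), $A$ is closed irreducible, so $\nabla A$ is defined.

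The first step is to produce a chain inside $\nabla A$. The candidates are the sets $\da L_{\cn n\el}$ for $n\in\Nat$. For each fixed $n$, every element $(m,\cn n\el)\in L_{\cn n\el}$ satisfies $(m,\cn n\el)<_3(n,\el)$ because $\min(\cn n\el)=n\leq n$; hence $L_{\cn n\el}\subseteq{\da}(n,\el)$, and since $(n,\el)\in L_{\el}\subseteq A$ this gives $\da L_{\cn n\el}\in\nabla A$. The chain inclusions $\da L_{\cn n\el}\subseteq\da L_{\cn{n+1}\el}$ follow because $(m,\cn n\el)<_1(m,\cn{n+1}\el)$, so $L_{\cn n\el}\subseteq\da L_{\cn{n+1}\el}$ and downward closure finishes the job.

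The second step is to identify the supremum of $(\da L_{\cn n\el})_{n\in\Nat}$ inside $\osob\H$. By Proposition~\ref{prop:union}(2)--(4), directed suprema in $\osob\H$ are computed as Scott closures of unions. The union is contained in $\da L_{\el}=A$ (because each $\da L_{\cn n\el}$ is), and conversely, for every $m\in\Nat$ the chain $(m,\cn n\el)_{n\in\Nat}$ lies in the union and has supremum $(m,\el)$ by Proposition~\ref{p:H-is-a-dcpo}. Hence the closure contains $L_{\el}$, and by downward closure contains $\da L_{\el}=A$. Since $A$ is already Scott-closed, the supremum is exactly $A$.

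The final, and most delicate, step is to check that $A \notin \nabla A$, i.e.\ that no single $x\in\H$ dominates the whole level $L_{\el}$. This is a short case split on the five composite relations listed in Proposition~\ref{prop:char<}(6): for $(m,\el)\leq x$ to hold non-trivially, the left-hand string $\el$ would have to factor as $\cn n{s}$ or as $ts$ with $t\neq\el$, both of which are impossible. Thus $(m,\el)\leq x$ forces $x=(m,\el)$, and no $x$ can do this for every $m\in\Nat$ at once. Consequently $\nabla A$ contains the chain $(\da L_{\cn n\el})_n$ but misses its supremum $A$, so $\nabla A$ is not Scott-closed and $\H$ is not dominated. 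The main obstacle is really only the relation-theoretic bookkeeping in this last step, and it is short once one consults Proposition~\ref{prop:char<}.
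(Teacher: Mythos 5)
Your proof is correct and follows essentially the same route as the paper's: both take the closed irreducible set $\H=\da L_\el$ and the directed family $\{\da L_{(n)}\mid n\in\Nat\}$ in $\nabla\H$, whose supremum $\H$ is not of the form $\da x$. You simply spell out in more detail the steps (directedness, computation of the supremum, absence of a top element) that the paper leaves to the reader.
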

\begin{proof}
By the preceding proposition, $\H=\da L_\el\in \osob{\H}$ and for every $n \in \Nat$, $\da L_{(n)} \in \osob{\H}$ (where we use ``$(n)$'' for the singleton list containing~$n$).  Furthermore, for each $n \in \Nat$, $\da L_{(n)} \subseteq \da (n,\el)$ and hence $\mathcal{F} := \{\da L_{(n)} \mid n \in \Nat\}$ is a directed family in $\nabla\H$.  But $\bigvee \mathcal{F}$ equals $\H$ and it is not the case that $\H\triangleleft\H$.  Thus, $\H$ fails to be dominated.
\end{proof}

\subsection{From the set $\Nat^*$ to the tree~$\T$}
\label{subsec:tree}
\begin{figure}[t]
\includegraphics[scale=0.8]{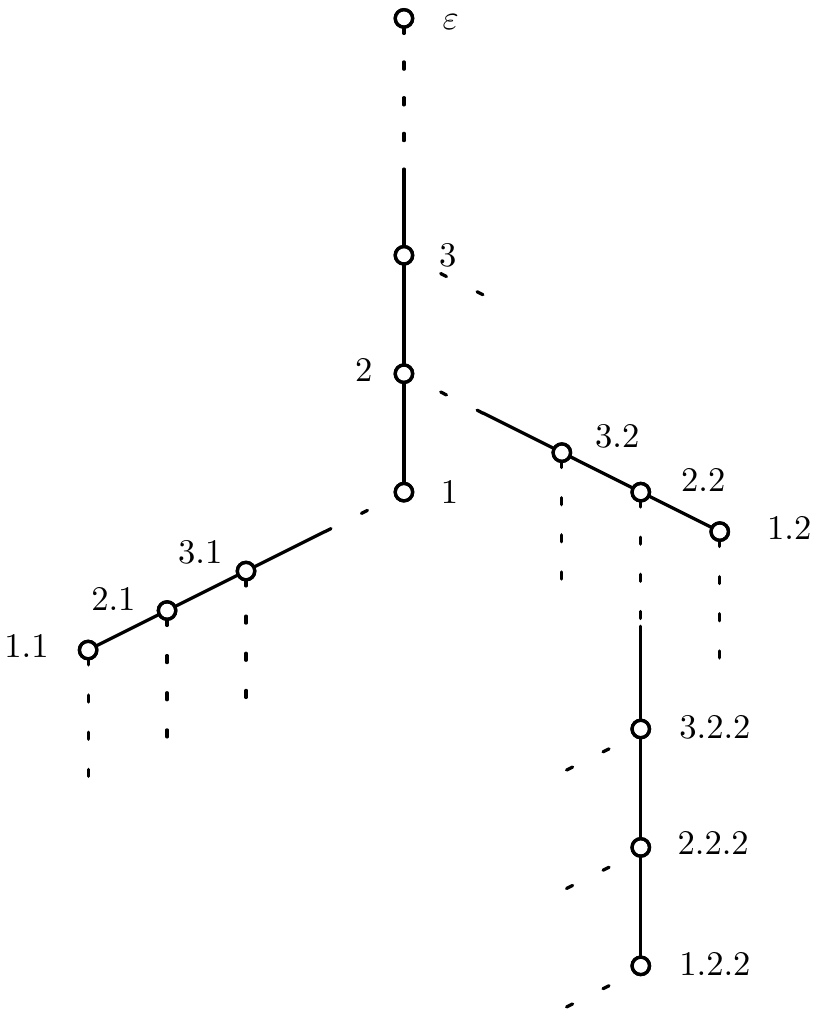}
\caption{\label{fig:tree}The order~$\sleq$ on $\Nat^*$.}
\end{figure}

For the definition of the order structure on~$\H$ in Section~\ref{subsec: hoxi as a poset} we introduced three relations, $<_1$, $<_2$, and~$<_3$. The first two of these concern only the string component of points $(m,s)$. We use them to define an order~$\sleq$ on~$\Nat^*$:
\begin{itemize}
\item $\cn ns\sless_1\cn{n'}s$ if $n<n'$
\item $ts\sless_2 s$ if $t\ne\el$
\end{itemize}
In analogy to Proposition~\ref{prop:char<}, one may now show that ${\sless}\defeq{\sless_1}\cup{\sless_2}\cup({\sless_2};{\sless_1})$ is transitive and irreflexive, and hence that ${\sleq}\defeq({\sless}\cup{=})$ is an order relation on~$\Nat^*$. Figure~\ref{fig:tree} attempts to give an impression of the resulting ordered set~$\T$.

Clearly, if $s'\sleq s$ in $\T$ then $(m,s')\leq(m,s)$ in~$\H$, for any $m\in\Nat$. Moreover, we have:

\begin{prop}\label{p:tree}
  $\T$ is a tree with root~$\el$, specifically, $\ua s$ is linearly ordered for all $s\in\Nat^*$.\footnote{This is a slight abuse of terminology: For a poset to qualify as a ``tree'' it is common to also require that $\sqsupseteq$ is well-founded. This is not the case here.}
\end{prop}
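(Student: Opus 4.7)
The plan is to describe $\ua s$ concretely for an arbitrary $s = a_1 a_2 \cdots a_k \in \Nat^*$ (with $k \geq 0$) and read off directly that it is a chain. Using transitivity and the decomposition ${\sless} = {\sless_1} \cup {\sless_2} \cup ({\sless_2};{\sless_1})$ established just before the statement, I would first show that $\ua s$ consists precisely of $\varepsilon$ together with all strings of the form $t_{i,b} := b \cdot a_{i+1} \cdots a_k$ for $i \in \{1,\dots,k\}$ and $b \in \Nat$ with $b \geq a_i$ (reading $a_{k+1}\cdots a_k$ as the empty string). Each disjunct contributes its own piece: $s \sless_1 t$ forces $i=1$ with $b > a_1$; $s \sless_2 t$ yields either $t = \varepsilon$ or a proper suffix of $s$, corresponding to $i \geq 2$ with $b = a_i$; and $s \mathrel{(\sless_2;\sless_1)} t$ gives precisely $i \geq 2$ with $b > a_i$. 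Conversely, each $t_{i,b}$ and $\varepsilon$ is witnessed above $s$ by whichever disjunct matches its parameters. A short length count shows that distinct parameter pairs $(i,b)$ produce distinct strings, so the parametrisation is injective.

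With this explicit description in hand, I would then verify that $\sleq$ orders the family linearly by the rule: $\varepsilon$ sits at the top, and below it $t_{i,b} \sleq t_{i',b'}$ iff $i < i'$, or $i = i'$ and $b \leq b'$. The central case is $i < i'$: then $a_{i'} a_{i'+1} \cdots a_k$ is a proper suffix of $t_{i,b}$, so $t_{i,b} \sless_2 a_{i'} a_{i'+1} \cdots a_k$, and this suffix either equals $t_{i',b'}$ (when $b' = a_{i'}$) or is one $\sless_1$-step below it (when $b' > a_{i'}$), yielding $t_{i,b} \mathrel{(\sless_2;\sless_1)} t_{i',b'}$ in the latter case. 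When $i = i'$, the two strings share the same tail $a_{i+1} \cdots a_k$ and are compared by $\sless_1$ on their first characters (or are equal). That $\varepsilon$ dominates every other element is immediate from $\sless_2$ applied with the empty second argument. Together this shows $\ua s$ is totally ordered and that $\varepsilon$ is the unique maximum of $\T$, i.e., its root.

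The only real book-keeping concerns boundary cases — $k = 0$ (so $\ua \varepsilon = \{\varepsilon\}$ is trivially a singleton chain), $i = k$ (where the tail of $t_{k,b}$ is empty, so $t_{k,b}$ is just the one-character string $b$), and confirming the injectivity of the parametrisation — none of which presents any genuine obstacle. The main point where care is needed is the enumeration step: one must make sure no element of $\ua s$ escapes the listed forms, which is precisely why having $\sless$ already presented as the union of the three closed-form disjuncts is essential, since it reduces the verification to a small, finite case analysis rather than an induction over composition length.
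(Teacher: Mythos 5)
Your proof is correct. It rests on exactly the same ingredient as the paper's argument --- the decomposition ${\sless}={\sless_1}\cup{\sless_2}\cup({\sless_2};{\sless_1})$ --- but packages the case analysis differently. The paper takes two arbitrary elements $s',s''$ strictly above $s$ and runs through the possible combinations of disjuncts witnessing $s\sless s'$ and $s\sless s''$, checking comparability in each of the resulting cases; you instead spend the disjuncts once to compute $\ua s$ explicitly as $\sset{\el}\cup\set{b\cdot a_{i+1}\cdots a_k}{1\leq i\leq k,\;b\geq a_i}$ and then read off the linear order on this parametrised family. Your enumeration is exhaustive and correct (each disjunct contributes precisely the piece you assign to it, and the length/first-character argument gives injectivity of the parametrisation), and the key comparability step for $i<i'$ --- passing to the proper suffix $a_{i'}\cdots a_k$ via $\sless_2$ and then taking an optional $\sless_1$-step --- is sound; note that, as you implicitly use, only one direction of your ``iff'' rule is needed for linearity. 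What your route buys is strictly more information: it identifies $\ua s$ up to order-isomorphism as $k$ stacked copies of $\Nat$ with $\el$ adjoined on top, which in particular makes the footnote's remark about the failure of well-foundedness of $\sqsupseteq$ immediately visible. The cost is the extra book-keeping of proving the enumeration exhaustive, which the paper's direct pairwise argument avoids.
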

\begin{proof}
  Let $s'$ and~$s''$ be two different elements strictly above~$s$. We use the definition of~$\sless$ and consider the possible combinations: If $s\sless_1s'$ and $s\sless_1s''$ then either $s'$ or $s''$ has increased the first element of~$s$ by more than the other one and then the two are related themselves by~$\sless_1$. Similarly, if $s\sless_2s'$ and $s\sless_2s''$ then either $s'$ or $s''$ has dropped a longer initial segment of~$s$ and the two are related themselves by~$\sless_2$. If $s\sless_1s'$ and $s\sless_2s''$ then $s'\sless_2s''$. The cases involving ${\sless_2};{\sless_1}$ are similar.
\end{proof}

In analogy to Propositions \ref{prop:cofinal chain} and~\ref{p:H-is-a-dcpo} one shows that one can restrict attention to those chains which are of the form $(\cn ns)_{n\in N}$ where $N$ is an infinite subset of~$\Nat$; for these the supremum is the element~$s$. All in all, we have:

\begin{prop}\label{p:level function l}
  The poset $\T$ is a dcpo and the map $l\colon\H\to\T$ defined by $l(m,s)=s$ is Scott-continuous. Moreover, $\T$ is sober in the Scott topology.
\end{prop}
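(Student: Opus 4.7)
The proposition has three parts. First, that $\T$ is a dcpo follows directly from the remark preceding its statement: every non-trivial directed subset of $\T$ is cofinal with a chain of the form $(\cn n s)_{n \in N}$ with supremum $s$. For Scott-continuity of $l$, monotonicity is a short case analysis: whenever $(m_1, s_1) \leq (m_2, s_2)$ in $\H$, one of the five compositions listed in Proposition~\ref{prop:char<}(6) applies, and projecting to the string coordinate yields an instance of $\sless_1$, $\sless_2$, or $\sless_2;\sless_1$ on $\T$, so $s_1 \sleq s_2$. Preservation of directed suprema then follows from Propositions~\ref{prop:cofinal chain} and~\ref{p:H-is-a-dcpo} together with their analogues for $\T$: a cofinal chain $(m, \cn n s)_{n \in N}$ of a directed set in $\H$ has supremum $(m, s)$, whose image $s$ under $l$ is the supremum of the image chain $(\cn n s)_{n \in N}$ in $\T$.

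The substantive part is sobriety. Let $A \subseteq \T$ be closed irreducible; I plan to show $A = \da s_0$ for some $s_0 \in A$. Every chain in $A$ has an upper bound in $A$ (finite chains via their maximum, non-trivial chains via the supremum, which lies in $A$ by Scott-closedness), so Zorn's Lemma supplies a maximal element of $A$ above each point, giving $A = \bigcup_{s \in \max A} \da s$. It therefore suffices to show $|\max A| = 1$.

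The key lemma is that, for every non-empty $I \subseteq \max A$, the set $B_I := \bigcup_{s \in I} \da s$ is Scott-closed. Downward closure is immediate; for closure under directed suprema, consider a non-trivial chain $(\cn n u)_{n \in N}$ in $B_I$ with supremum $u$ and, for each $n \in N$, pick $s_{i(n)} \in I$ with $\cn n u \sleq s_{i(n)}$. A case analysis on the four cases making up $\sleq$ on $\T$ shows that $\cn n u \sleq s$ implies either $s = \cn{n'} u$ with $n' \geq n$, or $u \sleq s$ already. If every $s_{i(n)}$ fell under the first alternative then, by pairwise incomparability of the maximal elements, all $s_{i(n)}$ would collapse to a single element $\cn{n^*} u \in I$; yet $n^* \geq n$ cannot hold for every $n$ in an infinite set $N$. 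Hence some $s_{i(n_0)}$ satisfies $u \sleq s_{i(n_0)}$, whence $u \in \da s_{i(n_0)} \subseteq B_I$.

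Given the lemma, any non-trivial partition $\max A = I_1 \sqcup I_2$ would exhibit $A = B_{I_1} \cup B_{I_2}$ as a cover by two proper Scott-closed subsets, contradicting irreducibility. Therefore $\max A$ is a singleton $\{s_0\}$ and $A = \da s_0$. The main obstacle is the case analysis inside the lemma: one must examine each way in which $\cn n u \sleq s$ can arise and verify that, outside the ``shift-by-one'' family $s = \cn{n'} u$, the relation upgrades to $u \sleq s$; maximality together with the infinitude of $N$ then completes the argument.
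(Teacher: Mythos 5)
Your proof is correct, and it reaches the same conclusion as the paper by a related but not identical route. Both arguments start from a maximal element of the closed irreducible set $A$ (existence justified, as you do, by closedness of $A$ under suprema of chains) and finish by writing $A$ as a union of two proper Scott-closed pieces. The difference is in the key lemma. The paper shows directly that $A\setminus\da s$ is Scott-closed for a single maximal element~$s$, getting downward closure from Proposition~\ref{p:tree} (linearity of $\ua t$) and closure under suprema from the closedness of~$A$. You instead prove that $\bigcup_{s\in I}\da s$ is Scott-closed for an arbitrary set~$I$ of maximal elements, via an explicit case analysis of the relation $\cn nu\sleq s$; I checked that analysis, and indeed the only way $\cn nu\sleq s$ can hold without $u\sleq s$ is $s=\cn{n'}u$ with $n'\geq n$, after which pairwise incomparability of maximal elements and the infinitude of~$N$ close the gap as you say. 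Note that by the tree property distinct maximal elements have disjoint downsets, so your two pieces $B_{\{s_0\}}$ and $B_{\max A\setminus\{s_0\}}$ coincide extensionally with the paper's $\da s_0$ and $A\setminus\da s_0$: the decomposition is the same, only the proof that the second piece is closed differs. The paper's version is slightly more economical because it reuses Proposition~\ref{p:tree}; yours is self-contained at the level of the generating relations $\sless_1,\sless_2$ and establishes the mildly stronger fact that any union of downsets of maximal elements of a Scott-closed subset of~$\T$ is again Scott-closed. You also spell out the dcpo and Scott-continuity claims that the paper dispatches with ``in analogy to''; those verifications are correct.
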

\begin{proof}
  Only the last statement is still in need of justification. Let $A$ be a closed irreducible subset of~$\T$ and $s$ a maximal element of~$A$. Then the set $A\setminus\da s$ is Scott-closed: let $t\in A\setminus\da s$ and $t'\sleq t$. If $t'\sleq s$ then either $s\sless t$ or $t\sless s$ by Proposition~\ref{p:tree}. The former is not possible since $s$ is maximal in~$A$ and the latter violates the assumption. Hence $t'\not\sleq s$ and therefore $t'\in A\setminus\da s$. The supremum of a non-trivial chain contained in $A\setminus\da s$ is itself contained in $A$ since $A$ is closed, and it can't be below~$s$ or otherwise all chain elements would be below~$s$. So indeed, $A\setminus\da s$ is Scott-closed. Since we have $A=(A\setminus\da s)\cup\da s$, irreducibility implies that $A=\da s$.
\end{proof}

\subsection{The irreducible subsets of~$\H$, part~2}
\label{subsec:and only if}
We now want to show that besides the closed irreducible sets listed in Proposition~\ref{p:characterisation} there are no others. We begin by examining how a closed subset~$A$ of~$\H$ intersects with the levels~$L_s$. For this we identify $L_s$ with $\Nat$ and set $A_s\defeq\set{m\in\Nat}{(m,s)\in A}$. We already know from principle~$(\ddagger)$ that each $A_s$ is either a finite set or all of~$\Nat$. Because of the close connection between the orders on $\H$ and~$\T$ we can furthermore state that $s'\sleq s$ implies $A_s\subseteq A_{s'}$, or in other words, the assignment $s\mapsto A_s$ is an antitone function from $\T$ to the powerset of~$\Nat$. Writing $\#(N)$ for the cardinality of a set, we have the following crucial fact:

\begin{lem}\label{l:key}
  The set $T_{\geq k}^A\defeq\set{s\in\Nat^*}{\#(A_s)\geq k}$ is Scott-closed in~$\T$.
\end{lem}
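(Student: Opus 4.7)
The plan is to verify the two conditions defining Scott-closedness in $\T$: $T_{\geq k}^A$ must be a lower set, and it must be closed under suprema of non-trivial chains. A direct comparison of the definitions of $<_1,<_2$ on $\H$ with $\sless_1,\sless_2$ on $\T$ shows that $s'\sleq s$ in $\T$ implies $(m,s')\leq(m,s)$ in $\H$ for every $m\in\Nat$; since $A$ is down-closed, this yields the antitonicity $A_s\subseteq A_{s'}$, and the lower-set condition is then immediate because $\#(A_s)\geq k$ forces $\#(A_{s'})\geq k$.

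For the directed-sup step, I would take any non-trivial chain in $T_{\geq k}^A$ and, by the analogue for $\T$ of Proposition~\ref{prop:cofinal chain} indicated just before Proposition~\ref{p:level function l}, reduce to a cofinal subchain of the form $(\cn n s)_{n\in N}$ whose supremum in $\T$ is $s$. Antitonicity turns this into a $\subseteq$-decreasing family $(A_{\cn n s})_{n\in N}$ of subsets of $\Nat$, each of cardinality at least $k$. Principle~$(\ddagger)$ tells us that each $A_{\cn n s}$ is either all of $\Nat$ or finite, and I would split on this dichotomy. If every $A_{\cn n s}$ equals $\Nat$, then for any fixed $m\in\Nat$ the chain $(m,\cn n s)_{n\in N}$ sits inside $A$ and Scott-closure of $A$ places its supremum $(m,s)$ in $A$ by Proposition~\ref{p:H-is-a-dcpo}, giving $A_s=\Nat$. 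Otherwise, from some index $n_0$ onward all $A_{\cn n s}$ are finite, and being a $\subseteq$-descending chain of finite sets of cardinality at least $k$, the sets themselves stabilize on a common $F\subseteq\Nat$ with $\#(F)\geq k$. For each $m\in F$, the non-trivial chain $(m,\cn n s)_{n\in N,\ n\geq n_0}$ lies in $A$ and has supremum $(m,s)$, forcing $(m,s)\in A$ and hence $F\subseteq A_s$. In either case $\#(A_s)\geq k$, as desired.

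The main obstacle is the stabilization argument in the finite subcase: one needs to observe that a $\subseteq$-descending sequence of finite subsets of $\Nat$ whose cardinalities are bounded below by $k$ must be eventually constant (because the cardinalities form a weakly decreasing sequence of non-negative integers), and then leverage this common tail $F$ to extract $k$ witnesses in $A_s$ by Scott-closure of $A$. Principle~$(\ddagger)$ is precisely what lets us bifurcate cleanly, ruling out any intermediate regime of infinite-but-proper $A_{\cn n s}$; everything else is a routine unpacking of the definitions of the order on $\H$ and $\T$.
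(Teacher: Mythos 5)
Your proof is correct and follows essentially the same route as the paper's: antitonicity of $s\mapsto A_s$ for the lower-set condition, then the dichotomy from principle~$(\ddagger)$, with the all-of-$\Nat$ case handled by lifting chains into $A$ and the finite case handled by stabilization of the descending family to a common finite set of size at least $k$. The only cosmetic difference is that you conclude $F\subseteq A_s$ where the paper notes $A_s=M$ exactly; both suffice.
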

\begin{proof}
  We already know that $T_{\geq k}^A$ is a lower set since $s\mapsto A_s$ is antitone. For closure under suprema, let $(\cn ns)_{n\in N}$ be a non-trivial chain contained in $T_{\geq k}^A$. If all the $A_{\cn ns}$ are equal to~$\Nat$, then for any $m\in\Nat$ we get the chain $(m,\cn ns)_{n\in N}$ in~$A$ whose supremum $(m,s)$ also must also belong to~$A$. In other words, in this case $L_s\subseteq A$ and hence $s\in T_{\geq k}^A$. If one of the $A_{\cn ns}$ is a finite set, then there is $n_0\in N$ such that $A_{\cn ns}$ is the \emph{same} finite set~$M$ for all $n\geq n_0$. For each $m\in M$ we have the chain $(m,\cn ns)_{n\in N, n\geq n_0}$ contained in~$A$, whence the supremum $(m,s)$ also belongs to~$A$. We obtain that $A_s$ is also equal to~$M$, and since $\#(M)\geq k$ we have $s\in T_{\geq k}^A$.
\end{proof}

We can use this lemma to learn more about the function $l\colon(m,s)\mapsto s$ which we introduced in Proposition~\ref{p:level function l}:

\begin{cor}\label{c:closed}
  The function~$l\colon\H\to\T$ is closed.
\end{cor}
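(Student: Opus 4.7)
The plan is to observe that the corollary is an immediate consequence of Lemma~\ref{l:key} applied with $k=1$. Specifically, for any subset $A\subseteq\H$, unfolding the definitions gives
\[
l(A)=\set{s\in\Nat^*}{\exists m\in\Nat,\ (m,s)\in A}=\set{s\in\Nat^*}{A_s\neq\emptyset}=T^A_{\geq 1}.
\]
So the direct image of $A$ under~$l$ is literally the set that Lemma~\ref{l:key} addresses at cardinality threshold~$1$.

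First I would note that $l$ being ``closed'' means by definition that the direct image of every Scott-closed subset of~$\H$ is Scott-closed in~$\T$. Then, given a Scott-closed $A\subseteq\H$, I would simply identify $l(A)$ with $T^A_{\geq 1}$ as above and invoke Lemma~\ref{l:key} with $k=1$ to conclude that $l(A)$ is Scott-closed in~$\T$. Since $A$ was an arbitrary Scott-closed subset of~$\H$, this proves that $l$ is a closed map.

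There is no real obstacle here, since the bulk of the work was done in Lemma~\ref{l:key}; the only substantive observation is the trivial set-theoretic rewriting $l(A)=T^A_{\geq 1}$. The corollary is therefore a one-line application of the preceding lemma, and the proof can be presented essentially in the form given above.
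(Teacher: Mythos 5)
Your proof is correct and is exactly the paper's argument: both identify $l(A)$ with $T^A_{\geq 1}$ for a Scott-closed $A$ and invoke Lemma~\ref{l:key} with $k=1$. Nothing to add.
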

\begin{proof}
  Let $A$ be a Scott-closed subset of~$\H$. Since $l(A)=T^A_{\geq1}$ the lemma implies that $l(A)$ is closed in~$\T$.
\end{proof}

\begin{thm}\label{thm:characterisation}
  The only irreducible closed subsets of $\H$ are those listed in Proposition~\ref{p:characterisation}, that is, closures of single elements and closures of levels.
\end{thm}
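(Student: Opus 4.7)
The plan is to analyse the structure of $A$ through the level map $l\colon\H\to\T$ of Proposition~\ref{p:level function l}, which is Scott-continuous and closed (Corollary~\ref{c:closed}). Since $l$ preserves both closure and irreducibility (the latter by the general fact recalled in Section~\ref{sec: positive}), and since $\T$ is sober, $l(A)$ must equal $\da s_0$ for a unique $s_0\in\Nat^*$. Consequently every $(m,s)\in A$ has $s\sleq s_0$, and a routine case analysis of the relations $<_1,<_2,<_3$ of Proposition~\ref{prop:char<} shows that $s\sleq s_0$ forces $(m,s)\leq(m,s_0)$ in~$\H$. Set $A_{s_0}:=\set{m\in\Nat}{(m,s_0)\in A}$; this is nonempty because $s_0\in l(A)$, and by principle~$(\ddagger)$ it is either all of $\Nat$ or finite.

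If $A_{s_0}=\Nat$, then $L_{s_0}\subseteq A$, so downward closure gives $\da L_{s_0}\subseteq A$ while the displayed inequality supplies the reverse inclusion; hence $A=\da L_{s_0}$, as in Proposition~\ref{p:characterisation}(2). Suppose instead that $A_{s_0}$ is finite, and fix $m^*:=\max A_{s_0}$ and $B^*:=\da(m^*,s_0)\subseteq A$. I claim $A=B^*$, which will immediately place $A$ in the form of Proposition~\ref{p:characterisation}(1) (and force $A_{s_0}=\{m^*\}$). Toward the claim, suppose for contradiction $A\supsetneq B^*$, and let $C:=\overline{A\setminus B^*}$. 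Then $A\subseteq B^*\cup C$ with both terms closed, so by irreducibility $A\subseteq B^*$ (immediately impossible) or $A\subseteq C$. Thus it suffices to exhibit $(m^*,s_0)\notin C$.

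This last verification is the main obstacle. Using the order analysis again, $(m^*,s_0)$ is a maximal element of~$A$: any $(m,s)\in A$ above it would need $s\sgeq s_0$ in $\T$, which combined with $s\sleq s_0$ forces $s=s_0$ and then $m=m^*$. Hence $(m^*,s_0)$ cannot be pulled into $C$ via downward closure from $A\setminus B^*$. By Proposition~\ref{prop:cofinal chain}, the only way it could enter $C$ is as the supremum of a chain cofinal with $(m^*,\cn{n}s_0)_{n\in\Nat}$. The choice $m^*=\max A_{s_0}$ is crucial here: once $n$ exceeds every other element of~$A_{s_0}$, a further case analysis of the relations~$<_1,<_2,<_3$ and their compositions shows that any element of~$A$ lying above $(m^*,\cn{n}s_0)$ must already belong to $B^*$. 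Tracing this inductively through the transfinite construction of the Scott closure of $A\setminus B^*$, the tail of this chain never enters~$C$, so $(m^*,s_0)\notin C$, completing the contradiction and the proof.
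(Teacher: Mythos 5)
Your first half---pushing $A$ forward along the closed continuous level map $l$, using sobriety of~$\T$ to get $l(A)=\da s_0$, and handling the case $A_{s_0}=\Nat$---is exactly the paper's argument. The gap is in the finite case, at the claim $(m^*,s_0)\notin C$ where $C=\overline{A\setminus B^*}$. Your two observations (maximality of $(m^*,s_0)$ in~$A$, and the fact that for $n>m^*$ every element of~$A$ above $(m^*,\cn n{s_0})$ already lies in $B^*$) only show that $(m^*,s_0)$ and the tail of the chain $(m^*,\cn n{s_0})_{n}$ avoid the \emph{first} stage $\da(A\setminus B^*)$ of the closure. But Scott closure is an iterated (in general transfinite) alternation of downward closure and adjunction of directed suprema, and the danger propagates down the tree: $(m^*,s_0)$ enters $C$ as soon as infinitely many $(m^*,\cn n{s_0})$ do; each $(m^*,\cn n{s_0})$ enters as soon as infinitely many $(m^*,\cn j{\cn n{s_0}})$ do, or as soon as it sits below a point of $A\cap L_{\cn n{s_0}}$ outside $B^*$; and so on through all levels $L_{t s_0}$. (To see that first-stage analysis is genuinely insufficient: if it happened that $A_{\cn n{s_0}}=\Nat$ for infinitely many~$n$, then every $(m^*,\cn j{\cn n{s_0}})$ would lie in $\da(A\setminus B^*)$ and $(m^*,s_0)$ would be swept into $C$ after three stages. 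That configuration is excluded here only by a counting argument over \emph{all} levels at once, which your sketch never makes.) So ``tracing this inductively through the transfinite construction'' is precisely the step that needs a proof, and the induction cannot be run with information about level $L_{s_0}$ alone.

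What is needed is a single Scott-closed set containing $A\setminus B^*$ and omitting $(m^*,s_0)$, and this is exactly what the paper's Lemma~\ref{l:key} provides: with $k=\#(A_{s_0})$, the set $l^{-1}(T^A_{\geq k+1})$ is Scott-closed, misses all of $L_{s_0}$, and together with the finitely many principal ideals $\da(m,s_0)$, $m\in A_{s_0}$, covers~$A$; irreducibility of~$A$ then selects a single $\da(m,s_0)$. Your proposal neither invokes nor reproves this cardinality-monotonicity lemma, and without it (or an equivalent closed separating set) the central claim $(m^*,s_0)\notin C$ is unsubstantiated. If you repair the argument by establishing that $\set{s'}{\#(A_{s'})\geq k+1}$ is Scott-closed in~$\T$, you will essentially have reconstructed the paper's proof.
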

\begin{proof}
  Let $A$ be an irreducible Scott-closed subset of~$\H$. Then $l(A)$ is irreducible because $l$ is continuous (Proposition~\ref{p:level function l}) and Scott-closed because $l$ is closed (Proposition~\ref{c:closed}). Since $\T$ is sober, $l(A)$ is equal to the Scott-closure of an element~$s\in\T$. Now we distinguish two cases: If $A_s=\Nat$ then $L_s\subseteq A$ and hence $A=\da L_s$. If, on the other hand, $\#(A_s)=k\in\Nat$, then we may consider the subset $T_{\geq k+1}^A$ of~$\T$ which is closed by Lemma~\ref{l:key}. By construction, $s$ does not belong to $T_{\geq k+1}^A$, so $B\defeq l^{-1}(T_{\geq k+1}^A)$ (which is closed by Proposition~\ref{p:level function l}) is definitely not all of~$A$. Furthermore, any element~$(m',s')$ of~$A$ which does not belong to $\da(A\cap L_s)$ lies in~$B$; this is because it is in addition to the $k$-many elements $\set{(m,s')}{m\in A_s}$ which do belong to $\da(A\cap L_s)$. This means that we can write $A$ as the finite union $B\cup\bigcup_{m\in A_s}\da(m,s)$ and irreducibility implies that $A$ is already contained in one of the $k+1$ components. Since it can't be contained in $B$, it must be the case that $A_s=\sset{m}$ and $A=\da(m,s)$.
\end{proof}

\subsection{The order sobrification of $\H$}
\label{subsec: sobrification of hoxi}
Recall that the unit~$\eta$ of the order sobrification monad maps $x$ to~$\da x$ and is Scott-continuous. We will employ this in our proof of Theorem~\ref{thm:H=sobH}:
\begin{proof}
  We have the counit map $\varepsilon_{\Gamma(\H)}\colon\Gamma(\H)\to\Gamma(\osob\H)$ discussed in Section~\ref{sec: positive}, Proposition~\ref{prop:union}; it maps a closed set~$C$ to~$\osob
  C=\set{A\in\osob\H}{A\subseteq C}$. Because $\eta_\H$ is Scott-continuous we have the map $\eta^{-1}\colon\Gamma(\osob\H)\to\Gamma(\H)$ in the opposite direction.
  We show that they are inverses of each other. For the first
  calculation let $C$ be a Scott-closed subset of~$\H$.
  \begin{displaymath}
    \begin{array}{rcll}
      x\in\eta^{-1}(\osob C)&\iff&\da x\in\osob C&\mbox{(definition of $\eta$)}\\
      &\iff&\da x\subseteq C&\mbox{(definition of $\osob C$)}\\
      &\iff&x\in C&\mbox{($C$ is a lower set)}
    \end{array}
  \end{displaymath}
  For the other composition, let $\mathcal{C}$ be a Scott-closed subset of~$\osob\H$.
  \begin{displaymath}
    \begin{array}{rcll}
      A\in\mathcal{C}&\stackrel{(*)}{\implies}&\forall x\in A.\;\da x\in\mathcal{C}&\mbox{($\forall x\in A.\;\da x\subseteq A$ as $A$ is a lower set in $\H$,}\\
      &&&\mbox{and $\mathcal{C}$ is a lower set in $\osob\H$)}\\
      &\iff&\forall x\in A.\;\eta(x)\in\mathcal{C}&\mbox{(definition of~$\eta$)}\\
      &\iff&A\subseteq\eta^{-1}(\mathcal{C})\\
      &\iff&A\in\osob{\eta^{-1}(\mathcal{C})}&\mbox{(definition of $\osob{(-)}\;$)}
    \end{array}
  \end{displaymath}
  Our proof will be complete if we can show the reverse of the first
  implication in the calculation above. For this we use our knowledge
  about the elements of~$\osob\H$, that is, the irreducible closed
  subsets of~$\H$ established in the previous section
  (Theorem~\ref{thm:characterisation}). For irreducible subsets~$A$ of
  the form~$\da x$ the reverse of~$(*)$ is trivially true, so assume
  that $A$ is the downset of some level~$L_s$ for $s\in\Nat^*$, and $\forall x\in\da L_s.\;\da x\in\mathcal{C}$. In
  particular we have $\da(m,s)\in\mathcal{C}$ for the elements $(m,s)$ of~$L_s$. Because all elements of level~$L_{\cn ms}$ are below $(m,s)$ (by relation~$<_3$), we have $\da
  L_{\cn ms}\subseteq \da(m,s)\subseteq A$ for all $m\in\Nat$, and hence $\da
  L_{\cn ms}\in\mathcal{C}$ as the latter is downward closed. Finally,
  $A=\da L_s=\cl(\bigcup_{m\in\Nat}\da L_{\cn ms})$ so
  $A\in\mathcal{C}$ follows as desired.
\end{proof}

\section*{Final remarks}
\label{sec: conclusion}
We have shown that the class of dominated dcpos is $\Gamma$-faithful and also seen that it is quite encompassing (Theorem~\ref{thm:dominated}). One may wonder whether there are other natural classes of dcpos on which $\Gamma$ is faithful or whether $\mathbf{domDCPO}$ is in some sense ``maximal.'' Strictly speaking, the answer to this question is no, since the singleton class $\mathcal{C}=\sset{\H}$ is (trivially) $\Gamma$-faithful, yet ---~as we showed~--- not contained in~$\mathbf{domDCPO}$. What is needed, then, is a proper definition of ``maximal'' before any attempt can be made to establish its veracity.

\subsection*{Acknowledgements} This research was started when the first author visited the third in the autumn of 2015, supported by grant SC7-1415-08 from the \emph{London Mathematical Society}. The third author would like to thank the participants of Dagstuhl Seminar 15441 on \emph{Duality in Computer Science} for their comments on an early presentation of the counterexample~$\H$. He is also grateful to Paul Taylor and Xiaodong Jia for discussing and clarifying the categorical setting that gives rise to the order sobrification monad.


\end{document}